\newcommand{\nums}{\mathbb{N}}
\newcommand{\ints}{\mathbb{Z}}
\newcommand{\oddints}{\ints_{\textit{odd}}}
\newcommand{\rats}{\mathbb{Q}}
\newcommand{\oddrats}{{\rats_{\textit{odd}}}}
\newcommand{\reals}{\mathbb{R}}
\newcommand{\lcm}{\textup{lcm}}
\newcommand{\floor}[1]{{\lfloor{#1}\rfloor}}
\newcommand{\ordth}[1]{{{#1}^\textit{th}}}
\newcommand{\Mod}{\textup{Mod}}
\newcommand{\CNOT}{\textup{C-NOT}}
\newcommand{\map}[3]{{{#1}:{#2}\rightarrow{#3}}}
\newcommand{\scong}[1]{\mathrel{\approx_{#1}}}
\newcommand{\myctrl}[1]{{\textup{C-}{#1}}}
\newcommand{\eref}[1]{Eq.~(\ref{#1})}
\newcommand{\erefs}[1]{Eqs.~(\ref{#1})}
\theoremstyle{plain}
\newtheorem{Thm}{Theorem}[section]
\newtheorem{Cor}[Thm]{Corollary}
\newtheorem{Lem}[Thm]{Lemma}
\newtheorem{Prp}[Thm]{Proposition}
\newtheorem{Fact}[Thm]{Fact}
\newtheorem{Obs}[Thm]{Observation}
\theoremstyle{definition}
\newtheorem{Def}[Thm]{Definition}
\newtheorem{remark}[Thm]{Remark}
\title{Implementing the quantum fanout operation with simple pairwise interactions}
\author{Stephen Fenner\thanks{Computer Science and Engineering Department, Columbia, SC 29208.  \url{fenner@cse.sc.edu}, \url{rwosti@email.sc.edu}} \\ University of South Carolina
\and
Rabins Wosti$\!\,^*$ \\ University of South Carolina}
\begin{document}
\maketitle

\begin{abstract}
It has been shown that, for even $n$, evolving $n$ qubits according to a Hamiltonian that is the sum of pairwise interactions between the particles, can be used to exactly implement an $(n+1)$-qubit fanout gate using a particular constant-depth circuit~[\href{https://arXiv.org/abs/quant-ph/0309163}{arXiv:quant-ph/0309163}]. However, the coupling coefficients in the Hamiltonian considered in that paper are assumed to be all equal. In this paper, we generalize these results and show that for all $n$, including odd $n$, one can exactly implement an $(n+1)$-qubit parity gate and hence, equivalently in constant depth an $(n+1)$-qubit fanout gate, using a similar Hamiltonian but with unequal couplings, and we give an exact characterization of which couplings are adequate to implement fanout via the same circuit.

We also investigate pairwise couplings that satisfy an inverse square law, giving necessary and sufficient criteria for implementing fanout given spatial arrangements of identical qubits in two and three dimensions subject to this law.  We use our criteria to give planar arrangements of four qubits that (together with a target qubit) are adequate to implement $5$-qubit fanout.
\end{abstract}

\noindent\textbf{Keywords: }
constant-depth quantum circuit;
quantum fanout gate;
Hamiltonian;
pairwise interactions;
spin-exchange interaction;
Heisenberg interaction;
modular arithmetic

\section{Introduction}

\subsection{Previous work}

In the study of classical Boolean circuit complexity, the fanout operation---where a Boolean value on a single wire is copied into any number of wires---is taken for granted as cost-free.  The picture is very different, however, with quantum circuits with unitary gates, where the number of wires is fixed throughout the circuit.  There, fanout gates are known to be very powerful primitives for making shallow quantum circuits~\cites{GHMP:QACC,HS:fanout,Moore:fanout,Spalek:fanout}.  It has been shown that in the quantum realm, fanout, parity (see below), and $\Mod_q$ gates (for any $q \ge 2$) are all equivalent up to constant depth and polynomial size \cites{GHMP:QACC,Moore:fanout}.  That is, each gate above can be simulated exactly by a constant-depth, polynomial-size quantum circuit using any of the other gates above, together with standard one- and two-qubit gates (e.g., $\CNOT$, $H$, and $T$).  This is not true in the classical case, where, for example, parity cannot be computed by constant-depth Boolean circuits with fanout and unbounded AND-, OR-, and NOT-gates~\cites{Ajtai:AC0,FSS:AC0,Hastad:AC0}.  Furthermore, using fanout gates, in constant depth and polynomial size one can approximate sorting, arithmetical operations, phase estimation, and the quantum Fourier transform~\cites{HS:fanout,Spalek:fanout}.  Fanout gates can also exactly implement $n$-qubit threshold gates, unbounded AND-gates (generalized Toffoli gates), and OR-gates in constant depth~\cites{TT:constant-depth-collapse}.  Since long quantum computations may be difficult to maintain due to decoherence, shallow quantum circuits may prove much more realistic, at least in the short term, and finding ways to implement fanout would then lend enormous power to these circuits.

On the negative side, fanout gates so far appear hard to implement by traditional quantum circuits.  There is mounting theoretical evidence that fanout gates cannot be simulated in small (sublogarithmic\footnote{Fanout on $n$ qubits can be implemented by a $O(\log n)$-depth circuit with $O(n)$ many $\CNOT$ gates.}) depth and small width, even if unbounded AND-gates
are allowed~\cites{FFGHZ:fanout,Rosenthal:parity}.

Therefore, rather than trying to implement fanout with a traditional small-depth quantum circuit, an alternate approach would be to evolve an $n$-qubit system according to one or more (hopefully implementable) Hamiltonians, along with a minimal number of traditional quantum gates.  It was shown in~\cites{Fenner:fanout,FZ:heisenberg} that simple Hamiltonians using spin-exchange (Heisenberg) interactions
do exactly this.  Those papers presented a simple quantum circuit for computing $n$-bit parity (equivalent to fanout) that included two invocations of the Hamiltonian along with a constant number of one- and two-qubit Clifford gates.

More recently, Guo et al.~\cite{GuoEtAl:fanout} presented a method for implementing fanout on a mesh of qubits.  Their approach involves a series of modulated long-range Hamiltonians applied to the qubits obeying inverse power laws.

\subsection{The current work}

This paper revisits the spin-exchange Hamiltonians considered in~\cites{Fenner:fanout,FZ:heisenberg}.  A major weakness of that work is that it assumes all the pairwise couplings between the spins to be equal.  This is physically unrealistic since we expect couplings between spins that are spatially far apart to be weaker than those between spins in close proximity.

In this paper, we show that $n$-qubit fanout can still be implemented by the exact same circuit $C_n$ given in~\cite{Fenner:fanout}, even with a wide variety of unequal pairwise couplings.  We also give an exact characterization of which couplings are allowed so that $C_n$ implements fanout.

Formally, the $n$-qubit fanout gate is the $(n+1)$-qubit unitary operator $F_n$ is defined such that $F_n\ket{x_1,\ldots, x_n,c} = \ket{x_1\oplus c,\ldots,x_n\oplus c,c}$ for all $x_1,\ldots,x_n,c\in\{0,1\}$.  The $n$-qubit parity gate is the $(n+1)$-qubit unitary operator $P_n$ such that
$P_n\ket{x_1,\ldots, x_n,t} = \ket{x_1,\ldots,x_n,t\oplus x_1\oplus\cdots\oplus x_n}$ for all $x_1,\ldots,x_n,t\in\{0,1\}$.  It was shown in \cite{Moore:fanout} that $F_n = H^{\otimes(n+1)}P_n H^{\otimes(n+1)}$, where $H$ is the $1$-qubit Hadamard gate.  Thus $F_n$ and $P_n$ are equivalent in constant depth, and any circuit implementing $P_n$ can be converted to one implementing $F_n$ by conjugating with a bank of Hadamard gates.

The circuit $C_n$ implements $P_n$ and is shown in Figure~\ref{fig:parity-circuit}.
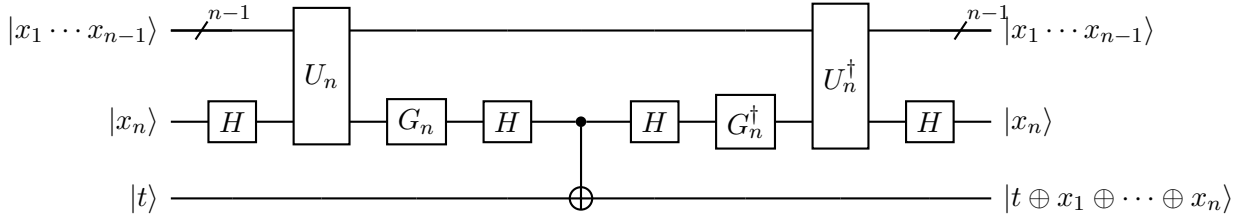
\begin{figure}
\begin{quantikz}
\lstick{\ket{x_1\cdots x_{n-1}}} & \qw\qwbundle{n-1} & \gate[wires=2]{U_n} &\qw &\qw &\qw &\qw &\qw & \gate[wires=2]{U_n^\dagger} &\qw & \rstick{\ket{x_1\cdots x_{n-1}}}\qwbundle{n-1}\qw & \\
\lstick{\ket{x_n}} & \gate{H} & & \gate{G_n} & \gate{H} & \ctrl{1} & \gate{H} & \gate{G_n^\dagger} & & \gate{H} & \rstick{\ket{x_n}}\qw \\
\lstick{\ket{t}} &\qw&\qw&\qw&\qw& \targ{} &\qw&\qw&\qw&\qw& \rstick{\ket{t\oplus x_1\oplus \cdots \oplus x_n}}\qw
\end{quantikz}
\caption{The circuit $C_n$ implements the parity gate $P_n$.  It uses the unitary operator $U_n$ and its adjoint once each.  Here, $G_n = S^{1-n}$ is either $S$ (the phase gate), $I$, $S^\dagger$, or $Z$ (the Pauli $z$-gate) if $n$ is congruent to $0$, $1$, $2$, or $3$, respectively.  Since $U_n$ is swap-invariant, the single-qubit gates can be moved to any of the first $n$ qubits, together with the control of the $\CNOT$ gate.}\label{fig:parity-circuit}
\end{figure}
Here, the $1$-qubit Clifford gate $G_n$ is either $S$, $I$, $S^\dagger$, or $Z$, depending on $n\bmod 4$, where $I$ is the identity, $S$ satisfies $S\ket{b} = i^b\ket{b}$ for $b\in\{0,1\}$, and $Z$ is the Pauli $z$-gate.  The unitary operator $U_n$ is defined as follows: for all $x = x_1\cdots x_n\in\{0,1\}^n$, letting $w = x_1+\cdots+x_n$,
\begin{equation}\label{eqn:U-n}
     U_n\ket{x} = i^{w(n-w)}\ket{x}\;.
\end{equation}
It was shown in~\cite{Fenner:fanout} that $U_n$ is the result of running a particular Hamiltonian $H_n$, defined below, for a certain amount of time on the first $n$ qubits.  It also was shown that $C_n = P_n$ for even $n$, and a similar calculation shows the same is true for odd $n$.  For the full result and its proof, see Appendix~\ref{sec:parity}.

We consider Hamiltonians of the form $H_n = \sum_{1\le i<j\le n} J_{i,j} Z_iZ_j$, where $Z_i$ and $Z_j$ are Pauli $Z$-gates acting on the $\ordth{i}$ and $\ordth{j}$ qubits, respectively, and the $J_{i,j}$ are real coupling constants (in units of energy).  $H_n$ is a simplified version of the spin-exchange interaction where only the $z$-components of the spins are coupled.  It bears some resemblance to a quantum version of the Ising model, as described in~\cite{Wikipedia:Transverse-field_Ising_model}, but with no transverse field and allowing long-range as well as nearest-neighbor couplings.  In~\cite{Fenner:fanout} it was shown that $U_n = e^{-iH_nt}$ for a certain time $t$, provided all the coupling constants $J_{i,j}$ are equal.

In this paper, we exactly characterize when $H_n$ can be run to implement $U_n$ by proving the following result in Section~\ref{sec:main}:

\begin{Thm}\label{thm:main}
$U_n \propto e^{-iH_nt}$ for some $t>0$ if and only if there exists a constant $J>0$ such that (1.) all $J_{i,j}$ are odd integer multiples of $J$, and (2.) the graph $G$ on vertices $1,\ldots,n$ with edge set $\{\{i,j\} : \mbox{$i<j$ and $J_{i,j}/J \equiv 3 \pmod{4}$}\}$ is Eulerian\footnote{We use this term in the looser sense that the graph need not be connected.}, that is, all its vertices have even degree.

Furthermore, if $t$ exists, we can set $t := \pi\hbar/4J$.
\end{Thm}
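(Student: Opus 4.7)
The plan is to leverage the fact that both $U_n$ and $e^{-iH_nt}$ are diagonal in the computational basis. Explicitly, $U_n\ket{x} = e^{i\pi w(n-w)/2}\ket{x}$ where $w=w(x)$ is the Hamming weight of $x$, while $e^{-iH_nt}\ket{x} = \exp\!\bigl(-it\sum_{i<j}J_{i,j}\,z_iz_j\bigr)\ket{x}$ with $z_k := 1-2x_k\in\{\pm 1\}$. Consequently $U_n\propto e^{-iH_nt}$ holds if and only if there exists a constant $\theta\in\reals$ with
\[
  \frac{\pi w(n-w)}{2} \;+\; t\!\!\!\sum_{1\le i<j\le n}\!\!\!J_{i,j}\,z_iz_j \;\equiv\; \theta \pmod{2\pi}
\]
for every $x\in\{0,1\}^n$ (absorbing $\hbar$ into $t$ for now; it will reappear at the end).

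My next step is to substitute $z_i=1-2x_i$ and apply the identity $w(n-w)=(n-1)w-2\sum_{i<j}x_ix_j$, in order to write the left-hand side as a single multilinear polynomial in $x_1,\ldots,x_n$. Setting $S:=\sum_{i<j}J_{i,j}$ and $D_i:=\sum_{j\ne i}J_{i,j}$, this polynomial is
\[
  tS \;+\; \sum_{i=1}^n\!\Bigl[\tfrac{\pi(n-1)}{2}-2tD_i\Bigr]x_i \;+\; \sum_{1\le i<j\le n}\!\bigl[4tJ_{i,j}-\pi\bigr]x_ix_j,
\]
with no monomials of degree three or higher. The key structural fact I invoke is the uniqueness of the multilinear representation of any function $\Delta\colon\{0,1\}^n\to\reals$: if $\Delta$ takes values in $2\pi\ints$ then every coefficient of its multilinear expansion lies in $2\pi\ints$, since each such coefficient is an integer $\pm 1$ combination of values of $\Delta$ by M\"obius inversion. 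Applied to our polynomial minus $\theta$, this cleanly separates the condition into three independent families of congruences: $\theta\equiv tS$, $2tD_i\equiv\pi(n-1)/2$ for every $i$, and $4tJ_{i,j}\equiv\pi$ for every $i<j$, all modulo $2\pi$.

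The quadratic family $4tJ_{i,j}\equiv\pi\pmod{2\pi}$ is exactly the statement that $J_{i,j}/J$ is an odd integer once one sets $J:=\pi\hbar/(4t)$ (equivalently $t:=\pi\hbar/(4J)$), which gives condition~(1) and pins down the value of $t$. Writing $J_{i,j}=m_{i,j}J$ with each $m_{i,j}$ odd and substituting into the linear family, the condition reduces, after dividing through by $\pi/2$, to $\sum_{j\ne i}m_{i,j}\equiv n-1\pmod 4$. Since every odd $m_{i,j}$ is either $1$ or $3$ modulo $4$, this is equivalent to saying that the number of $j\ne i$ with $m_{i,j}\equiv 3\pmod 4$ is even for every $i$, which is precisely the requirement that every vertex of $G$ has even degree, namely condition~(2). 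Sufficiency then follows by reading the chain of equivalences in reverse: given (1) and (2) with $t=\pi\hbar/(4J)$, all three families of congruences hold, the constant term determines $\theta$, and the two diagonal operators agree up to the global phase $e^{i\theta}$.

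The principal obstacle is the multilinear-coefficient step—proving that ``the polynomial is constant modulo $2\pi$'' forces each individual coefficient to lie in $2\pi\ints$. Once this small but essential lemma is in hand, the rest of the argument is routine arithmetic: match coefficients and recognise the mod-$4$ sum condition on odd integers as the evenness condition on vertex degrees in $G$.
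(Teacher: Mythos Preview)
Your argument is correct and takes a genuinely different route from the paper's own proof. The paper first invokes the equal-coupling case (Lemma~\ref{lem:U-n}) to subtract off a ``baseline'' phase, arriving at the condition $\sum_{i<j:\,x_i\ne x_j} f_{i,j}\equiv_\pi 0$; it then uses an ad~hoc probe with weight-$1$ and weight-$2$ vectors (Lemma~\ref{lem:discrete}) to force $f_{i,j}\equiv_{\pi/2}0$, and finally introduces the $\mathbb{F}_2$-matrix $M_n$, computes its rank (Lemma~\ref{LemM_n}), and reduces to the weight-$1$ rows to obtain the degree-parity condition. You bypass all of this machinery by writing the full phase directly as a multilinear polynomial in $x_1,\ldots,x_n$ and invoking M\"obius inversion once: the observation that a multilinear function $\{0,1\}^n\to 2\pi\ints$ must have every coefficient in $2\pi\ints$ simultaneously replaces the discreteness lemma, the matrix-rank computation, and the reduction to weight-$1$ test vectors. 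Your quadratic coefficients $4tJ_{i,j}-\pi\in 2\pi\ints$ yield condition~(1) with $J=\pi/(4t)$ directly, and the linear coefficients $\tfrac{\pi(n-1)}{2}-2tD_i\in 2\pi\ints$ unwind to $\sum_{j\ne i}m_{i,j}\equiv n-1\pmod 4$, i.e., $2d_i\equiv 0\pmod 4$, which is exactly the even-degree condition. The converse is then immediate by reading the equivalences backward. Your approach is shorter and more structural; the paper's has the minor advantage of making explicit that only $n-1$ weight-$1$ test vectors are needed once discreteness is established, and its $M_n$ framework could be reused for variant problems, but for the theorem as stated your method is cleaner.
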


Our result gives more flexibility in the coupling constants, allowing stronger and weaker couplings for spins placed nearer and farther apart, respectively.  For example, suppose we have four identical spins arranged in the corners of a square.  The spins diagonally opposite each other may have coupling constant $J$ whereas neighboring spins can have coupling constant $3J$.  The corresponding $g_{i,j} = 1$ is thus odd for neighboring spins, but this arrangement can be used to implement $U_4$, because the edges connecting neighboring spins form a square, which is Eulerian. For the spins arranged in the corners of a regular cube, neighboring spins may have coupling constant $7J$, spins on the diagonal ends of each face may have coupling constant $3J$ and the antipodal spins may have coupling constant $J$. Thus, the corresponding $g_{i,j}$ for antipodal spins is even while it is odd for the neighboring and diagonal spins and therefore, this arrangement can be used to implement $U_8$ because the edges connecting the neighboring spins and the spins on the diagonal ends of each face of a regular cube form an Eulerian graph. Similarly, for spins arranged on the corners of a regular octahedron, the graph of neighboring spins is Eulerian, so neighboring spins can have coupling $3J$ and antipodal spins $J$.

Starting in Section~\ref{sec:inverse-square-law}, we also investigate how spatial arrangements of qubits whose couplings obey an inverse square law can be used to implement $U_n$ exactly.  We find severe limitations on such arrangements.  In particular, we show that no three qubits can lie on the same line, and this generally rules out any kind of mesh arrangement.  Such arrangements therefore cannot implement $U_n$, assuming an inverse square law, unless extra physical barriers are used to moderate the couplings between certain pairs of qubits.  We give complete characterizations of which spatial arrangements in $\reals^2$ and $\reals^3$ can implement $U_n$ exactly.

Our work differs from the recent work of Guo et al.~\cite{GuoEtAl:fanout} in a number of respects.  They adapt a state transfer protocol of Eldredge et al.~\cite{PhysRevLett.119.170503} that, given an arbitrary $1$-qubit state $\alpha\ket{0}+\beta\ket{1}$, produces the GHZ-like state $\alpha\ket{0\cdots 0}+\beta\ket{1\cdots 1}$ on $n$ qubits.  Their protocol uses long-range interactions on a mesh of qubits by sequentially turning on and off various Hamiltonians to implement a cascade of C-NOT gates, where different Hamiltonians must be applied at different times.  Our scheme runs a simple, swap-invariant Hamiltonian twice, together with a constant number of $1$-qubit gates and a C-NOT gate connecting to the target.  Unlike in~\cite{GuoEtAl:fanout}, our scheme needs no ancilla qubits.  If the pairwise couplings must satisfy an inverse-square law, however, then our scheme has the disadvantages described above.





\section{Preliminaries}\label{sec:prelims}

We let $\ints$ denote the set of integers and $\nums$ the set of nonnegative integers.  We choose physical units so that $\hbar = 1$. For $n\in\nums$ and bit vector $x\in\{0,1\}^n$, we let $w(x)$ denote the Hamming weight of $x$, and we let $x_i$ denote the $\ordth{i}$ bit of $x$, for $1\le i\le n$.  We use $[n]$ to denote the set $\{1, \ldots, n\}$.  For $x,y,\alpha\in\reals$ with $\alpha > 0$, we write $x\equiv_\alpha y$ to mean that $(x-y)/\alpha$ is an integer, and we let $x\bmod \alpha$ denote the unique $y\in [0,\alpha)$ such that $x\equiv_\alpha y$.  For bits $a,b\in\{0,1\}$ we write $a\oplus b$ to mean $(a+b)\bmod 2$.  For vectors or operators $U$ and $V$ of the same type, we write $U\propto V$ to mean there exists $\theta\in\reals$ such that $U = e^{i\theta}V$, i.e., $U$ and $V$ differ by a global phase factor.

We use the symbol `$:=$' to mean ``equals by definition.''

\section{Main Results}\label{sec:main}




We consider a particular type of Hamiltonian $H_n$, acting on a system of $n \in \nums$ qubits, as the weighted sum of pairwise $Z$-interactions among the qubits in analogy to spin-exchange (Heisenberg) interactions:
\begin{equation}\label{eqn:Hamiltonian}
H_n := \sum_{1\le i<j\le n} J_{i,j} Z_i Z_j\;,
\end{equation}
where $Z_k$ is the Pauli $Z$-gate acting on the $\ordth k$ qubit for $k \in [n]$, and for $1 \leq i<j \leq n$, \ $J_{i,j}\in \reals$ is the coupling coefficient between the $\ordth i$ and $\ordth j$ qubits.  For convenience, we define $J_{j,i} := J_{i,j}$ for all $1\le i<j\le n$.
  $H_n$ differs from the usual (isotropic) Heisenberg interactions in that only the $z$-components of the spins are coupled.

Let $x=x_1 \cdots x_n\in\{0,1\}^n$ be a vector of $n$ bits, where each $x_i$ denotes the $\ordth{i}$ bit of $x$. Notice that $Z_iZ_j\ket{x}=(-1)^{x_i+x_j}\ket{x}$ for $1\leq i<j \le n$, that is, $Z_iZ_j$ flips the sign of $\ket{x}$ iff $x_i \ne x_j$. Further, for $t,\theta\in\reals$, let
\begin{equation}\label{eqn:V-n}
V_n := V_n(t,\theta) :=  e^{-i\theta}e^{-iH_nt}
\end{equation}
be the unitary operator realized by evolving the Hamiltonian $H_n$ of \eref{eqn:Hamiltonian} for time period $t$, where $\theta$ represents a global phase factor that may be introduced into the system. It has been explicitly shown in~\cite{Fenner:fanout} that for $n \equiv_4 2$, if $V_n\propto U_n$ (see \eref{eqn:U-n}),
one can realize the parity gate $P_n$ (and thus the fanout gate $F_n$) in constant additional depth for $n$ qubits via the quantum circuit $C_n$ shown in Figure~\ref{fig:parity-circuit}.  This fact indeed holds for all $n$, and we give a unified proof in Appendix~\ref{sec:parity} that the circuit of Figure~\ref{fig:parity-circuit} works for all $n$.  Further, it was shown in the same paper that $V_n\propto U_n$ if all the $J_{i,j}$ are equal, and we give an updated proof of this in Appendix~\ref{sec:U-n}, where we prove the following:

\begin{Lem}\label{lem:U-n}
For $n\ge 1$, let $H_n := J\sum_{1\le i<j\le n} Z_i Z_j$ for some $J>0$.  Then $U_n = V_n(t,\theta)$ for some $\theta\in\reals$, where $t := \pi/(4J)$ and $V_n(t,\theta)$ is as in \eref{eqn:V-n}.\footnote{$J$ is in units of energy and $t$ is in units of time, but this fact is irrelevant to our results; one can assume that $J$ and $t$ are unitless quantities.  In any case, $Jt$ is unitless, as we are taking $\hbar := 1$.}
\end{Lem}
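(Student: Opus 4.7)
The plan is to exploit the fact that $H_n$ is diagonal in the computational basis, reducing the claim to a counting identity about Hamming weights. Since each $Z_iZ_j$ acts diagonally with eigenvalue $(-1)^{x_i+x_j}$ on $\ket{x}$, the full operator $H_n$ is diagonal, and so is $e^{-iH_n t}$. It therefore suffices to compare the eigenvalues of $e^{-iH_n t}$ with those of $U_n$ up to a single global phase.

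First I would fix $x \in \{0,1\}^n$ of Hamming weight $w$ and evaluate $H_n\ket{x}$. The key combinatorial observation is that among the $\binom{n}{2}$ unordered pairs $\{i,j\}$, exactly $w(n-w)$ of them satisfy $x_i \ne x_j$ (choose one index with bit $1$ and one with bit $0$), and the remaining $\binom{n}{2} - w(n-w)$ pairs satisfy $x_i = x_j$. Summing the eigenvalues of $J Z_iZ_j$ over all pairs then gives
\begin{equation*}
H_n\ket{x} \;=\; J\!\left[\binom{n}{2} - 2\,w(n-w)\right]\!\ket{x}.
\end{equation*}

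Next I would substitute $t := \pi/(4J)$ to obtain $2Jt = \pi/2$, so that
\begin{equation*}
e^{-iH_n t}\ket{x} \;=\; e^{-iJt\binom{n}{2}}\cdot e^{\,i\pi w(n-w)/2}\ket{x}\;=\;e^{-i\pi\binom{n}{2}/4}\cdot i^{\,w(n-w)}\ket{x}.
\end{equation*}
The factor $i^{w(n-w)}$ is precisely the eigenvalue of $U_n$ on $\ket{x}$ by \eref{eqn:U-n}, while the prefactor $e^{-i\pi\binom{n}{2}/4}$ does not depend on $x$. Setting $\theta := -\pi\binom{n}{2}/4$ gives $V_n(t,\theta) = e^{-i\theta}e^{-iH_n t} = U_n$, as required.

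I do not expect any real obstacle here: the proof is essentially a computation once the counting identity $\#\{\{i,j\}:x_i\ne x_j\} = w(n-w)$ is written down, and the only subtlety is keeping track of the constant shift $\binom{n}{2} - 2w(n-w)$ versus the purely $w(n-w)$-dependent phase in $U_n$, which is exactly what the global phase $\theta$ is introduced to absorb.
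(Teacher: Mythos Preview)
Your proof is correct and follows essentially the same approach as the paper's: both reduce to computing the diagonal eigenvalue of $e^{-iH_nt}$ on a computational-basis state and matching it to $i^{w(n-w)}$ up to an $x$-independent phase. The only cosmetic difference is how the identity $\sum_{i<j}(-1)^{x_i+x_j}=\binom{n}{2}-2w(n-w)$ is obtained---you count agreeing versus disagreeing pairs directly, while the paper expands $\bigl(\sum_i(-1)^{x_i}\bigr)^2=(n-2w)^2$; both yield the same $\theta=-\pi\binom{n}{2}/4$ (the paper's stated sign appears to be a typo, but this is immaterial since only existence of $\theta$ is claimed).
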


\begin{proof}
See Appendix~\ref{sec:U-n}.
\end{proof}

The two main goals of the current work are (1) to show that equality of the $J_{i,j}$ is not necessary and (2) to determine exactly for which values of $J_{i,j}$ this is possible.  We will use Lemma~\ref{lem:U-n} to establish Theorem~\ref{thm:main}, the proof of which is the goal of this section.

Let $H_n$ be as in \eref{eqn:Hamiltonian} for arbitrary $J_{i,j}$.  For $x\in\{0,1\}^n$ and $t,\theta_1\in\reals$, setting $k_{i,j} := J_{i,j}t$ for convenience, we have
\begin{align}\label{eqn:V-n-of-x}
V_n(t,\theta_1)\ket{x}
& = \exp\left(-i\theta_1 - i\sum_{1\leq i<j \leq n} k_{i,j} (-1)^{x_i + x_j}\right)\ket{x}\;.
\end{align}
Using the fact that $U_n\ket{x} = \exp\left(i(\pi/2)w(x)(n-w(x))\right)\ket{x}$ and equating exponents, the condition that $V_n(t,\theta_1) = U_n$ is seen to be equivalent to
\begin{equation}
\theta_1 + \sum_{1 \leq i<j \leq n} k_{i,j} (-1)^{x_i + x_j} \equiv_{2\pi} -\left(\frac{\pi}{2}\right)w(x)(n-w(x)) \label{PhaseEqn}
\end{equation}
holding for all $x = x_1\cdots x_n\in\{0,1\}^n$.
Lemma~\ref{lem:U-n} yields a similar phase congruence in the case where $k_{i,j} = Jt = \pi/4$ for all $i<j$: there exists $\theta_2\in\reals$ such that for all $x\in\{0,1\}^n$,
\begin{equation}\label{eqn:U-n-phase}
\theta_2 + \frac{\pi}{4}\sum_{1\le i<j\le n} (-1)^{x_i+x_j} \equiv_{2\pi} -\left(\frac{\pi}{2}\right)w(x)(n-w(x))\;.
\end{equation}
Subtracting \eref{eqn:U-n-phase} from \eref{PhaseEqn} and rearranging, we get that $V_n(t,\theta) = U_n$ is equivalent to
\begin{align*}
\sum_{1 \leq i<j \leq n} \left(k_{i,j}-\frac{\pi}{4}\right) (-1)^{x_i + x_j} &\equiv_{2\pi} \theta_2 - \theta_1 & \forall x \in \{0,1\}^n\;,
\end{align*}
or equivalently, setting $f_{i,j} := k_{i,j} - \pi/4$ for all $1\le i<j\le n$,
\begin{align}
\sum_{1 \leq i<j \leq n} f_{i,j} (-1)^{x_i + x_j} &\equiv_{2\pi} \theta_2 - \theta_1 & \forall x \in \{0,1\}^n\;. \label{fijEqn}
\end{align}
Substituting the zero vector for $x$ in \eref{fijEqn} implies $\theta_2-\theta_1 \equiv_{2\pi} \sum_{i<j} f_{i,j}$, so \eref{fijEqn} can be rewritten as 
\begin{align} \nonumber
\sum_{i<j} f_{i,j} (-1)^{x_i + x_j} &\equiv_{2\pi} \sum_{i<j} f_{i,j} \\ \nonumber
\sum_{i<j} f_{i,j} \left((-1)^{x_i + x_j} - 1 \right) &\equiv_{2\pi} 0 \\
\sum_{i<j\;:\;x_i \neq x_j} f_{i,j} &\equiv_{\pi} 0 & \forall x\in\{0,1\}^n\;. \label{withoutsymm}
\end{align}
We have thus established the following lemma:

\begin{Lem} \label{lem:fij-meaning}
Let $H_n$ be as in (\ref{eqn:Hamiltonian}) and let $t\in\reals$ be arbitrary.  There exists $\theta\in\reals$ such that $V_n(t,\theta) = U_n$, if and only if \eref{withoutsymm} holds, where $f_{i,j} := J_{i,j}t - \pi/4$ for all $1\le i<j\le n$.
\end{Lem}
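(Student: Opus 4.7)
The plan is to package the chain of equivalences already laid out in the paragraphs preceding the lemma into a clean two-directional argument. Since both $V_n(t,\theta)$ and $U_n$ are diagonal in the computational basis, the identity $V_n(t,\theta) = U_n$ is equivalent to equality of eigenvalues on every $\ket{x}$. Comparing \eref{eqn:V-n-of-x} with the defining formula $U_n\ket{x} = \exp(i(\pi/2)w(x)(n-w(x)))\ket{x}$ and reducing mod $2\pi$ gives exactly the family of congruences \eref{PhaseEqn} (taking $\theta_1 := \theta$).

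To eliminate the weight-dependent right-hand side $-(\pi/2)w(x)(n-w(x))$, I would invoke Lemma~\ref{lem:U-n} with any convenient positive coupling constant $J$, which produces the reference congruence \eref{eqn:U-n-phase} for some $\theta_2\in\reals$. Subtracting \eref{eqn:U-n-phase} from \eref{PhaseEqn} cancels the weight-dependent terms and leaves \eref{fijEqn}, with $f_{i,j} := J_{i,j}t - \pi/4$ as in the statement.

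The last step is to absorb the constant $\theta_2 - \theta_1$. Substituting $x = 0^n$ into \eref{fijEqn} forces $\theta_2 - \theta_1 \equiv_{2\pi} \sum_{i<j} f_{i,j}$; subtracting this special case from the general one and using that $(-1)^{x_i+x_j}-1$ equals $-2$ when $x_i\ne x_j$ and $0$ otherwise gives \eref{withoutsymm}. For the converse direction, I would emphasize that $\theta$ is a free parameter: given that \eref{withoutsymm} holds, define $\theta := \theta_2 - \sum_{i<j} f_{i,j}$ (with $\theta_2$ from Lemma~\ref{lem:U-n}) and run the chain in reverse to recover \eref{PhaseEqn} for every $x$, hence $V_n(t,\theta) = U_n$.

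No step is genuinely hard, so there is no real obstacle; the only subtlety worth flagging is that both directions of the iff rely on choosing $\theta$ after the fact, which works precisely because $V_n$ carries its global phase as an explicit parameter. I would also note that Lemma~\ref{lem:U-n} is used only as an existence-of-phase fact for the equal-coupling case, so the particular choice of $J$ (and of the time at which the equal-coupling Hamiltonian is run) is immaterial to the argument.
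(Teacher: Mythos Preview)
Your proposal is correct and follows essentially the same route as the paper: the paper's proof \emph{is} the chain of equivalences displayed immediately before the lemma, passing from \eref{PhaseEqn} to \eref{fijEqn} via Lemma~\ref{lem:U-n} and then to \eref{withoutsymm} by plugging in $x=0^n$. Your only addition is making the converse direction explicit by naming the choice $\theta := \theta_2 - \sum_{i<j} f_{i,j}$, which the paper leaves implicit in its ``equivalent'' phrasing.
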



\begin{Lem} \label{lem:discrete}
Let $\{f_{i,j}\}_{1\le i<j\le n}$ be real numbers such that \eref{withoutsymm} holds.  Then $f_{i,j} \equiv_{\pi/2} 0$ for all $1 \leq i < j \leq n$.
\end{Lem}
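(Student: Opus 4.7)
The plan is to isolate each individual $f_{i,j}$ by two carefully chosen substitutions of $x\in\{0,1\}^n$ into \eref{withoutsymm}. When $n\ge 3$ there is no single $x$ whose set of differing-bit pairs equals the singleton $\{\{i,j\}\}$, so a one-shot approach will not work; instead, I would combine two global constraints so that every term other than $f_{i,j}$ cancels modulo $\pi$. Throughout, I will use the symmetric extension $f_{j,i}:=f_{i,j}$, in line with the paper's convention for the $J$'s.

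First, fix any index $i\in[n]$ and substitute into \eref{withoutsymm} the bit vector having a single $1$ in position $i$ (and $0$'s elsewhere). The condition $x_k\neq x_\ell$ then holds for a pair $\{k,\ell\}$ exactly when $i\in\{k,\ell\}$, so \eref{withoutsymm} collapses to
\[
S_i \;:=\; \sum_{j\neq i} f_{i,j} \;\equiv_\pi\; 0.
\]

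Second, fix $i<j$ and substitute into \eref{withoutsymm} the bit vector whose $1$'s are exactly in positions $i$ and $j$. Now $x_k\neq x_\ell$ iff $|\{k,\ell\}\cap\{i,j\}|=1$, i.e., the contributing pairs are those of the form $\{i,k\}$ or $\{j,k\}$ with $k\notin\{i,j\}$. Summing the corresponding $f$-values yields
\[
(S_i - f_{i,j}) + (S_j - f_{i,j}) \;=\; S_i + S_j - 2 f_{i,j},
\]
so \eref{withoutsymm} gives $S_i + S_j - 2 f_{i,j}\equiv_\pi 0$. Combined with $S_i\equiv_\pi 0$ and $S_j\equiv_\pi 0$ from the first step, this forces $2f_{i,j}\equiv_\pi 0$, which is precisely $f_{i,j}\equiv_{\pi/2}0$, as required.

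I expect no real obstacle: the whole argument is a two-line calculation. The only point that needs care is the bookkeeping in the second substitution, where $f_{i,j}$ appears once in each of $S_i$ and $S_j$ but must be excluded from the actual sum (since positions $i$ and $j$ now carry the same bit), which is why exactly $2f_{i,j}$ is subtracted. The statement is vacuous for $n\le 1$, and the two substitutions above are always available whenever $n\ge 2$.
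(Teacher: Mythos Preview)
Your proof is correct and essentially identical to the paper's: both substitute the two weight-$1$ vectors at positions $i$ and $j$ and the weight-$2$ vector at positions $\{i,j\}$ into \eref{withoutsymm}, then take the linear combination that cancels everything except $2f_{i,j}$. The only cosmetic difference is that you package the weight-$1$ congruences as $S_i\equiv_\pi 0$ and $S_j\equiv_\pi 0$ before combining, whereas the paper writes out all three sums explicitly and subtracts.
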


\begin{proof}
For convenience, define $f_{j,i} := f_{i,j}$ for all $i<j$.  For $a\in [n]$, let $x^{(a)}\in\{0,1\}^n$ be the $n$-bit vector whose $\ordth{a}$ bit is $1$ and whose other bits are all $0$.  Consider two different bit vectors $x^{(a)}$ and $x^{(b)} \in \{0,1\}^n$ for $a<b$.  Also, consider a third bit vector $y \in \{0,1\}^n$ with $w(y)=2$ such that its bits are set to 1 in exactly the $a$ and $b$ positions, i.e., $y = x^{(a)}\oplus x^{(b)}$.  Plugging in $x^{(a)}$, $x^{(b)}$, and $y$, respectively into \eref{withoutsymm}, we have
\begin{align}
\sum_{j \in [n]\;:\; j \neq a} f_{a,j} &\equiv_\pi 0 \label{aEqn} \\
\sum_{i \in [n]\;:\; i \neq b} f_{i,b} &\equiv_\pi 0 \label{bEqn} \\
\sum_{k \in [n]\;:\; k \notin \{a,b\}} (f_{a,k} + f_{k,b}) &\equiv_\pi 0 \label{kEqn}
\end{align}
\eref{aEqn}$+$(\ref{bEqn})$-$(\ref{kEqn}) gives
\begin{equation}
\left(\sum_{j \in [n]\;:\; j \neq a} f_{a,j} - \sum_{k \in [n]\;:\; k \notin \{a,b\}} f_{a,k} \right) + \left(\sum_{i \in [n]\;:\; i \neq b} f_{i,b} - \sum_{k \in [n]: k \notin \{a,b\}} f_{k,b} \right) = 2f_{a,b} \equiv_\pi 0\;.
\end{equation}
Therefore, $f_{a,b} \equiv_{\pi/2} 0$.
Since, $a$ and $b$ are chosen arbitrarily, the conclusion follows.
\end{proof}

\begin{Def} \label{DefM_n}
For $n \geq 2$, let $M_n$ be the $2^n\times \binom{n}{2}$ matrix over the $2$-element field $\mathbb{F}_2$ with rows $m_x$ indexed by bit vectors $x$ of length $n$ and columns indexed by pairs $\{i,j\}$ for $1\le i < j \le n$, whose $\ordth{(x,\{i,j\})}$ entry is $m_{x,\{i,j\}} = x_i \oplus x_j$.
\end{Def}

\begin{Lem} \label{LemM_n}
Every matrix $M_n$ defined by Definition~\ref{DefM_n} has rank $n-1$, and its rows are spanned by any set of $n-1$ rows $m_x$ for $x$ with Hamming weight~1.
\end{Lem}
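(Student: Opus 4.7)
The plan is to exploit an $\mathbb{F}_2$-linearity of the map $x\mapsto m_x$. First I would observe that for any $x,y\in\{0,1\}^n$ and any pair $\{i,j\}$, we have $(x_i\oplus y_i)\oplus(x_j\oplus y_j) = (x_i\oplus x_j)\oplus(y_i\oplus y_j)$, so $m_{x\oplus y} = m_x\oplus m_y$. Hence $\varphi:\mathbb{F}_2^n\to\mathbb{F}_2^{\binom{n}{2}}$ defined by $\varphi(x) := m_x$ is an $\mathbb{F}_2$-linear map, and the row space of $M_n$ is exactly the image of $\varphi$.

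Next I would determine $\ker\varphi$. Since $m_x = 0$ iff $x_i = x_j$ for every $1\le i<j\le n$, we have $\ker\varphi = \{0^n, 1^n\}$, which is a $1$-dimensional subspace of $\mathbb{F}_2^n$. By rank-nullity, $\dim(\operatorname{im}\varphi) = n-1$, so $\operatorname{rank} M_n = n-1$.

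For the second assertion, let $e_1,\ldots,e_n$ denote the standard basis vectors of $\mathbb{F}_2^n$; these are precisely the weight-$1$ vectors. Fix any $k_0\in[n]$ and consider the $n-1$ rows $\{m_{e_k} : k\in[n]\setminus\{k_0\}\}$. Suppose $\sum_{k\neq k_0} c_k\, m_{e_k} = 0$ with $c_k\in\mathbb{F}_2$. Letting $y := \sum_{k\neq k_0} c_k e_k$, linearity gives $m_y = 0$, so $y\in\{0^n,1^n\}$; but the $k_0$-th bit of $y$ is $0$, forcing $y = 0^n$ and hence every $c_k = 0$. Thus these $n-1$ rows are linearly independent, and since the ambient row space has dimension exactly $n-1$, they span it.

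The argument is essentially a bookkeeping exercise once the linearity of $\varphi$ is recognized, so I do not foresee a real obstacle; the only thing to be careful about is making sure that the weight-$1$ rows lie in the row space (they do, since they are themselves rows of $M_n$) and that the kernel computation uses the fact that the $n\ge 2$ hypothesis guarantees at least one pair $\{i,j\}$ exists, so the ``all coordinates equal'' characterization of $\ker\varphi$ is meaningful.
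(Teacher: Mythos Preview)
Your proof is correct and follows essentially the same approach as the paper: both hinge on the additivity $m_{x\oplus y}=m_x\oplus m_y$ and the identification of $\{0^n,1^n\}$ as the only $x$ with $m_x=0$. The paper argues spanning first (every row is a sum of weight-$1$ rows) and then pins down the single dependence $\sum_{x\in S}m_x=0$ directly, whereas you package the same two facts as ``$\varphi$ is linear with $1$-dimensional kernel'' and invoke rank--nullity; the difference is purely organizational.
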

\begin{proof}
All scalar and vector addition below is over $\mathbb{F}_2$.
Let $S:=\{x \in \{0,1\}^n: w(x)=1\}$ be the set of $n$-bit vectors of Hamming weight~1.  For $n$-bit vectors $r$ and $s$, we can write the $\ordth{\{i,j\}}$ component of the sum $m_r+m_s$ as
\begin{equation*}
(m_r + m_s)_{\{i,j\}} = m_{r,\{i,j\}} + m_{s,\{i,j\}} = (r_i + r_j) + (s_i + s_j) = (r_i + s_i) + (r_j + s_j) \\
= m_{r + s,\{i,j\}}\;,
\end{equation*}
and thus $m_r + m_s = m_{r+s}$.
With this observation, we can infer that every row in the matrix $M_n$
can be expressed as the sum of the rows indexed by $n$-bit vectors in $S$. In particular, we have
$$
\sum_{x \in S} m_x = m_{11\cdots 1} = \vec{0}\;.
$$
This causes a linear dependence among the $n$ vectors in the set $S$.  The sum of rows indexed by any nonempty proper subset of $S$, however, results in a row indexed by an $n$-bit vector containing at least one $0$ and one $1$, and thus cannot be all zeros, which means there is no linear dependence corresponding to any proper subset of $S$.  It follows that every matrix $M_n$ of the above form has rank $n-1$, and any set of $n-1$ rows with indices in $S$ spans all the rows of $M_n$.
\end{proof}

Notice that Lemma~\ref{lem:discrete} results in the following corollary as an immediate consequence.
\begin{Cor} \label{1Corollary}
Let $\{f_{i,j}\}_{1\le i<j\le n}$ be as in Lemma~\ref{lem:discrete}, and define $g_{i,j} := 2f_{i,j}/\pi$ for all $1\le i<j\le n$.  Then $g_{i,j}\in\ints$ for all $i<j$, and \eref{withoutsymm} is equivalent to $M_ng \equiv_2 \vec 0$, where $g$ is the column vector with entries $g_{i,j}$.
\end{Cor}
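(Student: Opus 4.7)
The plan is to extract the corollary directly from Lemma~\ref{lem:discrete} by unpacking the definitions. For the integrality claim, Lemma~\ref{lem:discrete} tells us $f_{i,j} \equiv_{\pi/2} 0$, i.e., $f_{i,j}$ is an integer multiple of $\pi/2$, so $g_{i,j} := 2f_{i,j}/\pi$ lies in $\ints$ for every pair $i<j$. This is essentially immediate from the definitions.

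For the equivalence claim, I would substitute $f_{i,j} = (\pi/2)\, g_{i,j}$ into \eref{withoutsymm} to get
\[
\frac{\pi}{2}\sum_{i<j\,:\,x_i \neq x_j} g_{i,j} \;\equiv_\pi\; 0 \qquad \text{for all } x\in\{0,1\}^n.
\]
Because the $g_{i,j}$ are integers, dividing both sides by $\pi/2$ converts the congruence modulo $\pi$ into a congruence modulo $2$:
\[
\sum_{i<j\,:\,x_i \neq x_j} g_{i,j} \;\equiv_2\; 0 \qquad \text{for all } x\in\{0,1\}^n.
\]
The last step is a direct translation into the matrix language of Definition~\ref{DefM_n}: the entry $m_{x,\{i,j\}} = x_i \oplus x_j$ equals $1$ precisely when $x_i \neq x_j$ and $0$ otherwise, so the left-hand side above is exactly the $x$-indexed component of $M_n g$ reduced mod $2$. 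Quantifying over all $x$ then yields $M_n g \equiv_2 \vec 0$, which is the desired reformulation.

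I do not anticipate any real obstacle, since the corollary is essentially a change of variables that repackages Lemma~\ref{lem:discrete} and \eref{withoutsymm} into the linear-algebraic language established by Definition~\ref{DefM_n}. The one subtle point worth noting is that the division-by-$\pi/2$ step is reversible only because the $g_{i,j}$ are already known to be integers; so the integrality claim is logically prior to and needed for the equivalence claim, which is exactly the order in which the corollary states them.
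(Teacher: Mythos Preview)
Your proposal is correct and matches the paper's approach exactly: the paper states the corollary as ``an immediate consequence'' of Lemma~\ref{lem:discrete} without giving a separate proof, and your argument---integrality from $f_{i,j}\equiv_{\pi/2}0$, then scaling \eref{withoutsymm} by $2/\pi$ and reading off the rows of $M_n$ via Definition~\ref{DefM_n}---is precisely the intended unpacking. One small remark: the scaling step $x\equiv_\pi 0 \iff (2/\pi)x\equiv_2 0$ is reversible for all real $x$, not just integers; the integrality of the $g_{i,j}$ is what makes the $\mathbb{F}_2$-matrix formulation $M_n g$ meaningful, rather than what makes the scaling reversible.
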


\begin{proof}[Proof of Theorem~\ref{thm:main}]
Let $H_n$ be as in \eref{eqn:Hamiltonian}.  For $t>0$, the statement that $U_n \propto e^{-iH_nt}$ is equivalent to the existence of some $\theta\in\reals$ such that $V_n(t,\theta) = U_n$, where $V_n(t,\theta)$ is defined by \eref{eqn:V-n}.  By Lemma~\ref{lem:fij-meaning}, this in turn is equivalent to \eref{withoutsymm}, i.e., $\sum_{i < j\;:\;x_i \neq x_j} f_{i,j} \equiv_\pi 0$ for all $n$-bit vectors $x$, where $f_{i,j} := J_{i,j}t-\pi/4$ for all $1\le i < j \le n$.
From Lemma~\ref{lem:discrete} and Corollary~\ref{1Corollary},
\eref{withoutsymm} holds if and only if
\begin{enumerate}[(i)]
\item
$f_{i,j} \equiv_{\pi/2} 0$, and therefore, letting $g_{i,j}:=2f_{i,j}/\pi$, we have $g_{i,j}\in\ints$ for all $1 \leq i < j \leq n$, and
\item
$M_ng \equiv_2 \vec{0}$, where $g$ is the $
\binom {n}{2}$-dimensional column vector of $g_{i,j}$'s and $M_n$ is as in Definition~\ref{DefM_n}.
\end{enumerate}

Solving for $J_{i,j}$ in terms of $f_{i,j}$ gives
\[ J_{i,j} = \frac{f_{i,j}+\pi/4}{t} = (2g_{i,j}+1)\left(\frac{\pi}{4t}\right) = (2g_{i,j}+1)J \]
for all $1\le i<j\le n$, where we set $J := \pi/(4t) > 0$, whence $t = \pi/(4J)$.  Notice that $J_{i,j}/J = 2g_{i,j}+1$ is an odd integer and
\begin{equation}\label{eqn:J-g-connection}
\frac{J_{i,j}}{J} \equiv_4 \begin{cases}
    1 & \mbox{if $g_{i,j} \equiv_2 0$,} \\
    3 & \mbox{if $g_{i,j} \equiv_2 1$.}
    \end{cases}
\end{equation}

Recall (Lemma~\ref{LemM_n}) that the rows of the matrix $M_n$ are spanned by those indexed by $n$-bit vectors with Hamming weight~$1$.  Letting $S$ be the set of all $x\in\{0,1\}^n$ with $w(x)=1$, it follows that the condition $M_ng \equiv_2 0$ is equivalent to $m_x g \equiv_2 0$ holding for all $x\in S$.
Fix any $r\in [n]$ and let $x\in S$ be such that $x_r=1$ and $x_s=0$ for all $s\ne r$.
Then
\begin{equation} \label{Eulerian}
m_x g \equiv_2 \sum_{1\le i < j\le n} (x_i + x_j)g_{i,j} \equiv_2 \sum_{i<r} g_{i,r} + \sum_{r<j}g_{r,j} \equiv_2 \sum_{i<r\;:\;g_{i,r}\equiv_2 1} g_{i,r} + \sum_{r<j\;:\;g_{r,j}\equiv_2 1} g_{r,j}\;.
\end{equation}

Let $G$ be the graph with vertex set $[n]$ where an edge connects vertices $i<j$ iff $g_{i,j}$ is odd.  Then the right-hand side of \eref{Eulerian} is the degree of the vertex $r$ in $G$.  The condition $m_x g\equiv_2 0$ is then equivalent to the degree of $r$ being even.  Since, $r\in [n]$ (and hence $x\in S$) was chosen arbitrarily, this applies to all the vertices of $G$.  Finally, from \eref{eqn:J-g-connection} we have for all $i<j$ that $J_{i,j}/J \equiv_4 3$ if and only if $g_{i,j}$ is odd, and so the theorem follows.
\end{proof}

Here is an easy restatement of Theorem~\ref{thm:main} that avoids graph concepts.  (Recall that we set $J_{j,i} := J_{i,j}$ for all $i<j$.)

\begin{Cor}
$U_n \propto e^{-iH_nt}$ for some $t>0$ if and only if there exists a constant $J>0$ such that (1.) all $J_{i,j}$ are odd integer multiples of $J$, and (2.) for every $i\in [n]$,
\[ \prod_{j\;:\;j\ne i}\frac{J_{i,j}}{J} \equiv_4 1\;. \]
Furthermore, if $t$ exists, we can set $t := \pi\hbar/4J$.
\end{Cor}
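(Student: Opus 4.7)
The plan is to derive the corollary directly from Theorem~\ref{thm:main} by reformulating its Eulerian condition as a multiplicative congruence modulo $4$. Condition~(1) is already identical in both statements, so the only work is to show that, given all $J_{i,j}/J$ are odd integers, the graph $G$ of Theorem~\ref{thm:main} is Eulerian if and only if $\prod_{j\ne i} J_{i,j}/J \equiv_4 1$ for every $i\in[n]$.

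First I would fix $i\in[n]$ and note that for each $j\ne i$, the odd integer $J_{i,j}/J$ is congruent to either $1$ or $3$ modulo~$4$. Writing $d_i$ for the number of indices $j\ne i$ with $J_{i,j}/J\equiv_4 3$, the multiplicativity of the residue map $\ints\to\ints/4\ints$ gives
\[
\prod_{j\ne i}\frac{J_{i,j}}{J}\equiv_4 3^{d_i}.
\]
Since $3^2 = 9\equiv_4 1$, the residue $3^{d_i}\bmod 4$ depends only on the parity of $d_i$, equaling $1$ when $d_i$ is even and $3$ when $d_i$ is odd. Thus $\prod_{j\ne i}(J_{i,j}/J)\equiv_4 1$ iff $d_i$ is even.

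Now I would observe that $d_i$ is exactly the degree of vertex $i$ in the graph $G$ from Theorem~\ref{thm:main}, since $G$ puts an edge between $i$ and $j$ precisely when $J_{i,j}/J\equiv_4 3$. Therefore the condition that every vertex of $G$ has even degree (i.e., $G$ is Eulerian) is equivalent to the product congruence holding for every $i\in[n]$. Combining this equivalence with Theorem~\ref{thm:main} yields both directions of the corollary, and the value $t=\pi\hbar/4J$ carries over directly from the theorem.

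There is no real obstacle here; the corollary is a cosmetic restatement and the only content is the elementary fact that $3^k\bmod 4$ depends only on $k\bmod 2$. The proof can be written in a few lines once this observation is made explicit.
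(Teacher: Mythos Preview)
Your proposal is correct and follows essentially the same approach as the paper: both fix $i$, observe that each odd factor $J_{i,j}/J$ is $\equiv_4 1$ or $\equiv_4 3$, and note that the product is $\equiv_4 1$ iff an even number of factors are $\equiv_4 3$, which is exactly the even-degree condition on the graph $G$ from Theorem~\ref{thm:main}. Your write-up is slightly more explicit (introducing $d_i$ and computing $3^{d_i}\bmod 4$), but the argument is the same.
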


\begin{proof}
Fix $i\in [n]$.  Given that for all $j\ne i$, either $J_{i,j}/J \equiv_4 1$ or $J_{i,j}/J \equiv_4 3$, the product over all such $j$ is congruent to $1 \pmod{4}$ if and only if the latter congruence holds for an even number of such $j$.  This is the stated condition on the graph in Theorem~\ref{thm:main}.
\end{proof}

\section{Parity Versus $U_n$}
\label{sec:U-n-parity}

Fix $n\ge 2$.  Figure~\ref{fig:parity-circuit} gives a quantum circuit $C_n$ implementing the parity gate $P_n$ using a single $U_n$ gate and its inverse $U_n^\dagger$, together with $H$-gates, $S$-gates, and a single $\CNOT$-gate.  In this section we briefly give some related implementations that tighten this result.

First, we observe that $U_n^4 = I$ for all $n$, and $U_n^2 = I$ if $n$ is odd.  Thus $U_n^\dagger$ can be replaced with $U_n^3$ or $U_n$ in the circuit $C_n$, depending on the parity of $n$.  We may also replace the $\CNOT$ gate in $C_n$ with a $U_2$ gate and some $1$-qubit gates.  Letting $\myctrl{Z}$ be the controlled Pauli $z$-gate, we have
\[ \myctrl{Z} = (S^\dagger \otimes S^\dagger)U_2 = U_2(S^\dagger \otimes S^\dagger)\;, \]
which allows us to implement $P_n$ by the following circuit, which is a modification of $C_n$:
\begin{center}
\begin{quantikz}
& \qw\qwbundle{n-1} & \gate[wires=2]{U_n} &\qw &\qw &\qw &\qw &\qw &\qw & \gate[wires=2]{U_n^\dagger} &\qw & \qwbundle{n-1}\qw & \\
& \gate{H} & & \gate{G_n} & \gate{H} & \gate{S^\dagger}& \gate[wires=2]{U_2} & \gate{H} & \gate{G_n^\dagger} & & \gate{H} & \qw \\
&\qw&\qw&\qw& \gate{H} &\gate{S^\dagger} & &\gate{H} &\qw&\qw&\qw& \qw
\end{quantikz}
\end{center}
Thus $P_n$ can be implemented with at most four $U_n$ gates, a single $U_2$ gate, and constantly many $H$ and $S$ gates.

Conversely, $U_n$ can be implemented with two $P_n$-gates, a few $S$-gates, and an ancilla qubit.  Let $G := S^{2-n}$, which is $Z$, $S$, $I$, or $S^\dagger$, as $n$ is congruent to $0$, $1$, $2$, or $3\pmod{4}$, respectively.  For any $x \in\{0,1\}^n$, one readily checks that
\[ U\ket{x} \otimes \ket{0} = (U_n\otimes I)(\ket{x}\otimes\ket{0}) = P_n (G^{\otimes n}\otimes S) P_n(\ket{x}\otimes\ket{0})\;, \]
where $I$ is the $1$-qubit identity operator.

\section{Couplings Obeying the Inverse Square Law}
\label{sec:inverse-square-law}

Here we consider identical qubits as points in Euclidean space, where the inter-qubit couplings satisfy an inverse square law, i.e., $J_{i,j}$ is proportional to $d_{i,j}^{-2}$, where $d_{i,j}$ is the Euclidean distance between qubits $i$ and $j$.  We find necessary and sufficient conditions for a general arrangement of $n$ such qubits to satisfy the conditions of Theorem~\ref{thm:main}, implementing $U_n$ exactly (up to an overall phase factor).

Our criteria forbid several kinds of arrangements of qubits.  In particular, no three qubits can lie on a line or form a right angle (Theorem~\ref{collinearTheorem}), ruling out many well-studied geometric arrangements of qubits, e.g., meshes.  Our results suggest that an unmodulated inverse square law between identical point qubits is not useful for an exact implementation of $U_n$ for unbounded $n$; either the couplings must be modified in some way or one must make do with an approximation of $U_n$, or both.  See Section~\ref{sec:conclusions} for further discussion.  On the positive side, our criteria allow infinite families of arrangements of four qubits in the plane.

\subsection{General results}

\begin{Def}\label{def:adequacy}
For $n>0$, we say that a set of positive real numbers $\{J_{i,j}\}_{1\le i<j\le n}$ is \emph{adequate for $U_n$} if it satisfies the conditions of Theorem~\ref{thm:main}, i.e., there exists $J>0$ such that all the $J_{i,j}$ are odd integer multiples of $J$, and for each $i\in[n]$, there are an even number of $j\ne i$ such that $J_{i,j}/J \equiv_4 3$.\footnote{We set $J_{j,i} := J_{i,j}$ as before.}  If this is the case, then we call the pairs $\{i,j\}$ such that $J_{i,j}/J \equiv_4 3$ \emph{thick edges}, and the other pairs (where $J_{i,j}/J \equiv_4 1$) \emph{thin edges}.

For $d>0$, we say that a set of $n$ pairwise distinct points $\{p_1,\ldots,p_n\}\subseteq\reals^d$ is \emph{inverse-square adequate for $U_n$} (\emph{isq-adequate} for short) if the set $\{d_{i,j}^{-2}\}_{1\le i<j\le n}$ is adequate for $U_n$, where $d_{i,j}$ is the Euclidean distance between $p_i$ and $p_j$.  We say that $\{p_1,\ldots,p_n\}$ is \emph{weakly isq-adequate} for $U_n$ if the set $\{d_{i,j}^{-2}\}_{1\le i<j\le n}$ satisifies item~(1.) of Theorem~\ref{thm:main} (but not necessarily (2.)), that is, all its elements are odd integer multiples of a common positive $J$.
\end{Def}


Observe that adequacy for $U_n$ only depends on the \emph{ratios} between the $J_{i,j}$; multiplying all the $J_{i,j}$ by the same real constant $c>0$ preserves adequacy for $U_n$: we just scale $J$ by the same constant $c$, and this scaling also preserves thick and thin edges.  It follows that isq-adequacy and weak isq-adequacy for $U_n$ is preserved under any bijective transformations of $\reals^d$ that leave distance ratios invariant, i.e., combinations of rotations, reflections, translations, and dilations by nonzero scaling factors.  We call such transformations \emph{similarities}.\footnote{``\emph{Conformal affine maps}'' may be more descriptive.}

Obviously, isq-adequacy implies weak isq-adequacy.  The reason to consider weak isq-adequacy is to strengthen negative results, which often hold for weak isq-adequacy.  We also have the following:

\begin{Obs}\label{obs:weak-subset}
If a set $S$ of $n$ points is weakly isq-adequate for $U_n$, then any $k$-element subset of $S$ is weakly isq-adequate for $U_k$.
\end{Obs}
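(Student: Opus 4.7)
The plan is to simply unpack Definition~\ref{def:adequacy} and notice that weak isq-adequacy, unlike full isq-adequacy, is an edge-local condition that is automatically preserved under taking subsets. First I would write $S = \{p_1, \ldots, p_n\} \subseteq \reals^d$ and let $d_{i,j}$ denote the Euclidean distance between $p_i$ and $p_j$. By definition, weak isq-adequacy of $S$ for $U_n$ asserts the existence of some $J > 0$ such that every $d_{i,j}^{-2}$ with $1 \le i < j \le n$ is an odd integer multiple of $J$.

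Next, I would take an arbitrary $k$-element subset $S' = \{p_{i_1}, \ldots, p_{i_k}\} \subseteq S$ and observe that the pairwise inverse-squared distances $\{d_{i_a, i_b}^{-2}\}_{1 \le a < b \le k}$ attached to $S'$ form a subset of $\{d_{i,j}^{-2}\}_{1 \le i < j \le n}$. Consequently, each of them is already an odd integer multiple of the same $J$ that witnessed weak isq-adequacy for $S$. Hence condition~(1.) of Theorem~\ref{thm:main} holds for $S'$ with the witness $J' := J$, which is exactly weak isq-adequacy of $S'$ for $U_k$.

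There is no real obstacle here, and this is the entire content of the observation. It is worth remarking explicitly (and I would include such a sentence in the final write-up) why the analogous statement fails for full isq-adequacy: condition~(2.) of Theorem~\ref{thm:main} requires the graph of thick edges on $[n]$ to be Eulerian, and deleting the vertices outside a subset $S'$ can change the parity of the degree of a retained vertex whenever that vertex had an odd number of thick edges going to deleted vertices. This is precisely the reason the authors stated Observation~\ref{obs:weak-subset} for weak rather than full isq-adequacy, and it motivates why weak isq-adequacy is a useful stepping stone for deriving negative results by restricting to small subsets.
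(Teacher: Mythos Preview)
Your proposal is correct and is exactly the intended justification: the paper states this as an Observation with no proof, since weak isq-adequacy is merely condition~(1.) of Theorem~\ref{thm:main} applied to the set $\{d_{i,j}^{-2}\}$, and restricting to a subset of points yields a subset of these values, which are still odd multiples of the same $J$. Your additional remark explaining why full isq-adequacy fails to be hereditary matches the paper's own one-line comment following the observation.
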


Note that Observation~\ref{obs:weak-subset} does not hold for isq-adequacy.

\begin{Thm} \label{collinearTheorem}
If a set of $n \geq 3$ points is weakly isq-adequate for $U_n$, then no three points are collinear nor do they form the vertices of a right triangle.
\end{Thm}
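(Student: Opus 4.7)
The plan is to apply Observation~\ref{obs:weak-subset} to reduce immediately to the case $n = 3$: since any three points of a weakly isq-adequate set are themselves weakly isq-adequate for $U_3$, it suffices to show that three distinct points satisfying this condition can neither be collinear nor form the vertices of a right triangle. Throughout, I would write $1/d_{i,j}^2 = g_{i,j} J$ where each $g_{i,j}$ is a positive odd integer and $J > 0$ is the common scale from Definition~\ref{def:adequacy}. Both cases should then boil down to the same parity obstruction between odd and even integers.

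I would handle the right-triangle case first, since it is the easier of the two. Assuming the right angle sits at $p_1$, Pythagoras gives $d_{2,3}^2 = d_{1,2}^2 + d_{1,3}^2$. Substituting $d_{i,j}^2 = 1/(g_{i,j}J)$ and clearing denominators should produce the integer identity $g_{1,2}\,g_{1,3} = g_{2,3}(g_{1,2} + g_{1,3})$. The left side is a product of two odd integers, hence odd; the right side contains the even factor $g_{1,2} + g_{1,3}$, hence is even. That is the desired contradiction.

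For the collinear case I would relabel so that $p_2$ lies between $p_1$ and $p_3$, giving $d_{1,3} = d_{1,2} + d_{2,3}$. Substituting $d_{i,j} = 1/\sqrt{g_{i,j}J}$ and cancelling $\sqrt{J}$ yields $1/\sqrt{g_{1,3}} = 1/\sqrt{g_{1,2}} + 1/\sqrt{g_{2,3}}$. Squaring this and isolating the cross term $2/\sqrt{g_{1,2}\,g_{2,3}}$ should show that this quantity is rational, which forces the positive integer $g_{1,2}\,g_{2,3}$ to be a perfect square $s^2$ with $s$ a positive integer. A short algebraic rearrangement should then produce the integer identity $g_{1,2}\,g_{2,3} = g_{1,3}(g_{1,2} + g_{2,3} + 2s)$. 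Once again the left side is odd while the factor $g_{1,2} + g_{2,3} + 2s$ on the right is even, delivering the same contradiction.

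The hard part will be extracting that integer identity in the collinear case, because the starting relation involves three independent square roots. The two observations I expect to do the work are (i) a positive integer whose square root is rational must itself be a perfect square (so $\sqrt{g_{1,2}\,g_{2,3}}$ is in fact an integer), and (ii) odd $+$ odd $=$ even, which powers the final parity argument uniformly in both cases.
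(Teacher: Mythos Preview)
Your proof is correct and is essentially the same argument as the paper's. The paper unifies the two cases via a single parameter $b\in\{0,1\}$ (writing $d_{1,3}^2 = d_{1,2}^2 + d_{2,3}^2 + 2b\,d_{1,2}d_{2,3}$) and arrives at the identity $2bp\sqrt{mn} = mn - np - mp$, then observes the right-hand side is odd while the left is either $0$ or (once $mn$ is forced to be a perfect square) even; your two separate identities $g_{1,2}g_{1,3} = g_{2,3}(g_{1,2}+g_{1,3})$ and $g_{1,2}g_{2,3} = g_{1,3}(g_{1,2}+g_{2,3}+2s)$ are exactly the $b=0$ and $b=1$ specializations of that equation after relabeling, and the parity contradiction is identical.
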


\begin{proof}
Without loss of generality we can assume $n=3$ by Obs.~\ref{obs:weak-subset}.  Suppose $p_1,p_2,p_3$ are three points such that either $d_{1,3} = d_{1,2}+d_{2,3}$ (for collinear points) or $(d_{1,3})^2 = (d_{1,2})^2 + (d_{2,3})^2$ (for points forming right triangle).  Equivalently, $(d_{1,3})^2 = (d_{1,2})^2 + (d_{2,3})^2 + 2bd_{1,2}d_{2,3}$, where $b$ is either $0$ or $1$.
If such an arrangement is weakly isq-adequate for $U_3$, then
there exist real $J>0$ and odd integers $m,n,p$ such that $J_{1,2} = mJ$, $J_{2,3} = nJ$, and $J_{1,3} = pJ$.
For the pairwise couplings, we then have
\begin{align*}
J_{1,2} = mJ&=\frac{1}{(d_{1,2})^2}\;, \\
J_{2,3} = nJ&=\frac{1}{(d_{2,3})^2}\;, \\
J_{1,3} = pJ&=\frac{1}{(d_{1,3})^2}\;.
\end{align*}
From the above equations, we can write
\begin{align} \nonumber
\frac{1}{mJ} + \frac{1}{nJ} - \frac{1}{pJ} &= (d_{1,2})^2 + (d_{2,3})^2 - (d_{1,3})^2 = -2bd_{1,2}d_{2,3} = -\frac{2b}{J\sqrt{mn}} \\ \nonumber
\frac{2b}{J\sqrt{mn}} &= \frac{1}{pJ} - \frac{1}{mJ} - \frac{c}{nJ} \\ \nonumber
\frac{2b}{\sqrt{mn}} &= \frac{mn - np - mp}{pmn} \\
2bp\sqrt{mn} &= mn - np - mp \label{CollinearEqn}
\end{align}
The right-hand side of (\ref{CollinearEqn}) is an odd integer, which requires $b=1$ and $mn$ to be a perfect square, but then the left-hand side is an even integer.  This contradicts the weak isq-adequacy for $U_3$ of any such arrangement.
\end{proof}

Despite the constraints given by Theorem~\ref{collinearTheorem}, there are some nontrivial arrangements that are isq-adequate for $U_n$.  A trivial arrangement is three points forming an equilateral triangle, with all couplings equal (to $J$).  Planar arrangements with four points are harder to come by, but there are some---infinitely many, in fact.  Figure~\ref{fig:four-points} gives two planar arrangements that are adequate for $U_4$.
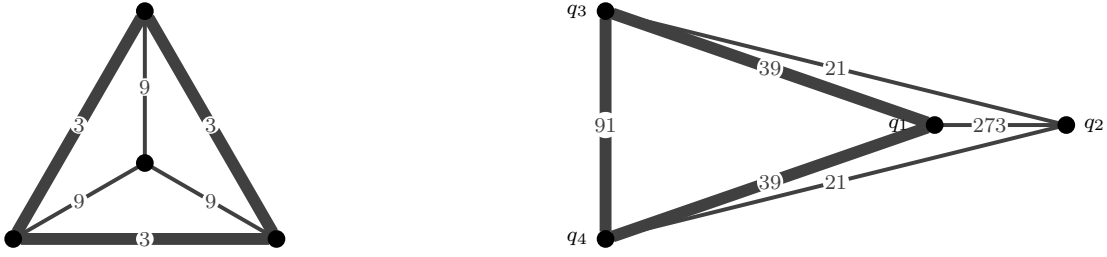
\begin{figure}
    \centering
    \begin{tikzpicture}
    \SetDistanceScale{1.75}
    \SetVertexStyle[FillColor=black,MinSize=0.2\DefaultUnit]
    \Vertex{P1}
    \Vertex[x=1,y=0.57735]{P2}
    \Vertex[x=2]{P3}
    \Vertex[x=1,y=1.732051]{P4}
    \Edge[label=$9$](P1)(P2)
    \Edge[label=$3$,lw=4.5pt](P1)(P3)
    \Edge[label=$3$,lw=4.5pt](P1)(P4)
    \Edge[label=$9$](P2)(P3)
    \Edge[label=$9$](P2)(P4)
    \Edge[label=$3$,lw=4.5pt](P3)(P4)
    \Vertex[label=$q_1$,position=left,distance=1mm,x=7,y=0.8660254]{Q1}
    \Vertex[label=$q_2$,position=right,x=8,y=0.8660254]{Q2}
    \Vertex[label=$q_3$,position=left,x=4.5,y=1.732051]{Q3}
    \Vertex[label=$q_4$,position=left,x=4.5]{Q4}
    \Edge[label=$273$,distance=0.4](Q1)(Q2)
    \Edge[label=$39$,lw=4.5pt](Q1)(Q3)
    \Edge[label=$39$,lw=4.5pt](Q1)(Q4)
    \Edge[label=$21$](Q2)(Q3)
    \Edge[label=$21$](Q2)(Q4)
    \Edge[label=$91$,lw=4.5pt](Q3)(Q4)
    \end{tikzpicture}
\caption{Two possible four-qubit arrangements in the plane that are isq-adequate for $U_4$.  Edges are labeled with the corresponding couplings $J_{i,j}/J$.  Left: an equilateral triangle with a point in the center.  Right: points $q_1,\ldots,q_4$ whose cartesian coordinates can be arranged as: $q_1=(0,0)$, $q_2 = (1,0)$, $q_3=(-5/2,\sqrt 3/2)$, $q_4=(-5/2,-\sqrt 3/2)$.  Thick and thin edges are as shown.}
\label{fig:four-points}
\end{figure}
In each of these, the couplings $kJ$ for $k\equiv_4 3$ form a $3$-cycle.  We will show that this is necessary.

In the rest of this section, we characterize all possible arrangements of four points in $\reals^3$ that are isq-adequate for $U_4$ by giving necessary and sufficient conditions for such arrangements via a form of modular arithmetic on $\rats$.  In Section~\ref{sec:squiggle} we define this form of modular arithmetic and give a few of its basic rules.  Section~\ref{sec:characterization} gives a condition equivalent to isq-adequacy for $U_4$ in terms of these rules.  Finally, in Section~\ref{sec:solve} we use these rules to characterize all possible $3$-dimensional arrangements isq-adequate for $U_4$.  In Section~\ref{sec:U-3}, we briefly characterize isq-adequacy for $U_3$.  In Section~\ref{sec:noU5}, we show that no arrangements in $\reals^3$ are \emph{weakly} isq-adequate for $U_n$ for $n\ge 5$.  It follows that no such arrangements are isq-adequate, either.

\subsection{A modular relation on $\rats$}
\label{sec:squiggle}


Throughout this subsection, we fix an arbitrary prime number $p$.  In the sequel we will only be interested in the case where $p=2$, but it takes no more effort for this section to use an arbitrary $p$.  Recall that for any $x,y\in\reals$ and nonzero $\alpha\in\reals$, we write $x\equiv_\alpha y$ to mean that $(y-x)/\alpha \in\ints$.

\begin{Def}\label{def:squiggle-congruence}
For any $n\in\ints$, let $q := p^n$.  For any $x,y\in\rats$ we write $x\scong q y$ to mean that there exists $k\in\ints$ coprime with $p$ such that $kx \equiv_q ky$.
\end{Def}

Note that if $p=2$, then the condition on $k$ in the above definition is that $k$ be odd.  The next definition is standard.

\begin{Def}\label{def:p-adic-norm}
Every nonzero $r\in\rats$ is uniquely expressible as a product $p^n s$, where $n\in\ints$ and $s\in\rats$ has both numerator and denominator coprime with $p$.  The \emph{$p$-adic norm}
of $r$ is then defined to be $|r|_p := p^{-n}$.
By convention, $|0|_p := 0$.
\end{Def}

It is evident from this definition that $|\mathord{-r}|_p = |r|_p$, that $|rs|_p = |r|_p|s|_p$, and that $|1/r|_p = 1/|r|_p = ||r|_p|_p$, for every $r,s\in\rats$.  We collect basic facts about the $\scong q$ relation in the following lemma:

\begin{Lem}\label{lem:basic-squiggle-congruence}
Let $p$ be prime and let $q := p^n$ for some integer $n\ge 0$.  Let $x,y,z\in\rats$ be any rational numbers.
\begin{enumerate}
\item\label{item:equiv}
$\scong q$ is an equivalence relation.
\item\label{item:equiv-to-approx}
$x\equiv_q y$ implies $x\scong q y$.
\item\label{item:int}
If $x,y\in\ints$, then $x \equiv_q y$ if and only if $x\scong q y$.
\item\label{item:add}
$x \scong q y$ implies $x+z \scong q y+z$.
\item\label{item:scale-up}
$x \scong q y$ if and only if $(\exists s\in\ints)[\; x\scong{pq} y + sq\;]$.
\item\label{item:expand}
For $z\ne 0$, let $m := |z|_p$.  Then $x \scong{qm} y$ iff $xz \scong q yz$.  In particular, $x\scong{q} y$ iff $px \scong{pq} py$.
\item\label{item:recip}
If $x\in\ints$ and $x$ is coprime with $p$, let $x'$ be any integer such that $xx' \equiv_q 1$.  Then $1/x \scong q x'$.
\item\label{item:one-or-three}
If $q>1$, then $|x|_p=1$ iff $(\exists m\in\ints_q^*)[\;x\scong q m\;]$.\footnote{$\ints_q^*$ is the set of integers in the range $0$ through $q-1$ that are coprime with $p$.}  There can be at most one such $m$.
\item\label{item:recip2}
If $x\scong q y$ and $|xy|_p = 1$, then $1/x \scong q 1/y$.
\item\label{item:zero}
$|x|_p \le 1/q$ if and only if $x\scong q 0$.
\end{enumerate}
\end{Lem}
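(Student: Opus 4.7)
The plan is to prove item~\ref{item:zero} first, since together with item~\ref{item:add} it yields the unifying characterization that $x\scong{q} y$ iff $|x-y|_p \le 1/q$, which then drives the rest of the lemma. For the forward direction of item~\ref{item:zero}, I would write $kx = qr$ with $k,r\in\ints$ and $k$ coprime with $p$; multiplicativity of $|\cdot|_p$ gives $|x|_p = |q|_p|r|_p/|k|_p \le 1/q$. For the converse, I put $x = p^m s/t$ in lowest terms with $s,t$ coprime to $p$; the hypothesis $|x|_p\le 1/q$ forces $m \ge n$ (where $q = p^n$), so $tx = q\cdot p^{m-n}s$ witnesses $x \scong{q} 0$ via $k := t$.

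With the norm characterization in hand, most items reduce to brief routine arguments. Reflexivity, symmetry, and transitivity for item~\ref{item:equiv} use the witnesses $k=1$, the same $k$, and $k_1k_2$ respectively, noting that the product of two integers coprime with $p$ is still coprime with $p$. Items~\ref{item:equiv-to-approx} and~\ref{item:add} are immediate from the definition. Item~\ref{item:int} follows because $\gcd(k,p)=1$ forces $\gcd(k,q)=1$, so $q\mid k(x-y)$ implies $q\mid(x-y)$. Item~\ref{item:expand} is the identity $|xz-yz|_p = m\cdot|x-y|_p$, and item~\ref{item:recip2} reduces to $|1/x - 1/y|_p = |x-y|_p/|xy|_p = |x-y|_p$. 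Item~\ref{item:recip} is a direct verification: $x(1/x - x') = 1 - xx'$ is a multiple of $q$, so the witness $k := x$ (coprime with $p$) certifies $1/x \scong{q} x'$.

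The two more substantive items are \ref{item:scale-up} and \ref{item:one-or-three}. For item~\ref{item:scale-up}, the forward direction is where Bezout's identity enters: writing $k(x-y) = qr$ with $\gcd(k,p)=1$, I would use Bezout to find $s,u\in\ints$ with $r = ks + pu$, whence $k(x - y - sq) = pqu$ exhibits $x\scong{pq} y + sq$; the converse is immediate from the norm characterization, since $sq$ is an integer multiple of $q$. For item~\ref{item:one-or-three}, given $|x|_p=1$ I would pick the lowest-terms representation $x = s/t$ with both $s,t$ coprime with $p$, invoke item~\ref{item:recip} to get $1/t \scong{q} t'$ for an integer inverse of $t$ modulo $q$, then multiply by $s$ and reduce modulo $q$ to a residue $m \in\{0,\ldots,q-1\}$; the hypothesis $q>1$ forces $p \mid q$, so $tt' \equiv_q 1$ gives $t'$ (and thus $m$) coprime with $p$, placing $m \in \ints_q^*$. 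Uniqueness reduces via item~\ref{item:int} to the standard fact that distinct residues modulo $q$ are incongruent; for the converse, $x \scong{q} m$ with $m\in\ints_q^*$ yields $|x-m|_p \le 1/q < 1 = |m|_p$, and the ultrametric inequality gives $|x|_p = 1$. The only genuine obstacle in the lemma is Bezout's appearance in item~\ref{item:scale-up}; everything else is bookkeeping on top of the $p$-adic norm characterization and multiplicativity of $|\cdot|_p$.
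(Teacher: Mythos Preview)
Your proof is correct, and it takes a genuinely different organizational route from the paper's. The paper argues each item directly from the witness definition of $\scong q$, repeatedly producing explicit $k$'s (and their inverses modulo $q$ or $pq$); for instance, its item~\ref{item:scale-up} writes $k(x-y)=aq$, splits $a=mp+r$, and then invokes an inverse of $k$ modulo $pq$, while its item~\ref{item:expand} is a chain of witness manipulations and item~\ref{item:recip2} is deduced from item~\ref{item:expand} with $z=1/(xy)$. You instead front-load item~\ref{item:zero} (and item~\ref{item:add}) to obtain the unifying characterization $x\scong q y \iff |x-y|_p\le 1/q$, after which items~\ref{item:expand} and~\ref{item:recip2} become one-line consequences of multiplicativity of $|\cdot|_p$, item~\ref{item:scale-up} falls out of a single B\'ezout decomposition $r=ks+pu$, and the reverse direction of item~\ref{item:one-or-three} is handled cleanly by the ultrametric inequality. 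Your approach is more conceptual and exposes that $\scong q$ is exactly the $p$-adic ball relation; the paper's approach is slightly more elementary in that it never needs the ultrametric inequality. One minor point: your use of the norm characterization inside item~\ref{item:expand} implicitly applies it to $\scong{qm}$ where $qm$ may be a negative power of $p$; this is fine because your proof of item~\ref{item:zero} nowhere uses $n\ge 0$, but it is worth saying so explicitly.
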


\begin{proof}[Proof sketch]
We sketch proofs of the less obvious facts.  Let $C := \ints\setminus p\ints$ be the set of all integers coprime with $p$.  For transitivity of $\scong q$ in (\ref{item:equiv}.), if $kx \equiv_q ky$ and $\ell y \equiv_q \ell z$ where $k,\ell\in C$, then $k\ell\in C$ and $k\ell x \equiv_q k\ell y \equiv_q k\ell z$.

In (\ref{item:int}.), if $x,y\in\ints$ and $kx\equiv_q ky$ for some $k\in C$, then letting $k'\in C$ be such that $k'k \equiv_q 1$, we have $x \equiv_q k'kx \equiv_q k'ky \equiv_q y$.

For the forward direction of (\ref{item:scale-up}.), if $kx\equiv_q ky$ for some $k\in C$, then $k(x-y) = aq$ for some $a\in\ints$.  Writing $a = mp+r$ for some $m,r\in\ints$, we get $k(x-y) = mpq+rq$, that is, $kx - (ky + rq) = mpq$.  Thus $kx \equiv_{pq} ky+rq \equiv_{pq} k(y+k'rq)$ where $k'\in C$ is such that $kk' \equiv_{pq} 1$.  Set $s:= k'r$.

For (\ref{item:expand}.), let $z = p^r a/b$ for some $r\in\ints$ and $a,b\in C$.  Noting that $m = p^{-r}$, we get
\begin{align*}
x \scong{qm} y &\iff (\exists k\in C)[\;kx \equiv_{qm} ky\;] \iff (\exists k\in C)[\; kp^rx \equiv_q kp^ry\; ] \\
&\iff (\exists k\in C)[\; kap^rx \equiv_q kap^ry\; ] \iff (\exists k\in C)[\; kbzx \equiv_q kbzy\; ] \\
&\iff (\exists k\in C)[\; kzx \equiv_q kzy\; ] \iff zx \scong q zy\;.
\end{align*}
The special case is obtained by setting $z:=1/p$ and using $px$ and $py$ instead of $x$ and $y$.

For (\ref{item:recip}.), $xx' \equiv_q 1 \stackrel{(\ref{item:equiv-to-approx}.)}{\implies} xx' \scong q 1 \stackrel{(\ref{item:expand}.)}{\implies} x' \scong q 1/x$, using $z := 1/x$ for the last implication.

For the forward direction of (\ref{item:one-or-three}.), if $x = a/b$ for $a,b\in C$, then $bx = a \equiv_q bb'a$, where $b'\in C$ satisfies $bb'\equiv_q 1$.  Set $m := b'a\bmod{q}$.  Then $m\in\ints_q^*$ and $bx \equiv_q bm$, yielding $x\scong q m$.  If $m\scong q x \scong q m'$ for $m,m'\in\ints_q$, then $m\equiv_q m'$ by (\ref{item:int}.), whence $m=m'$.  For the reverse direction, if $kx \equiv_q km$ for some $m\in\ints_q^*$ with $k\in C$, then $kx = rq+km$ for some $r\in\ints$, and thus $x = (rq+km)/k$.  Both numerator and denominator are in $C$.

(\ref{item:recip2}.) follows from (\ref{item:expand}.) by setting $z := 1/(xy)$.

For (\ref{item:zero}.), for $k\in C$, \ $kx \equiv_q 0$ iff $(\exists r\in\ints)[\,kx = rq\,]$, iff $(\exists r\in\ints)[\,x = rq/k\,]$, iff $|x|_p\le 1/q$.
\end{proof}

\begin{Def}\label{def:residue}
For $q>1$ a power of prime $p$ and $x\in\rats$ with $|x|_p = 1$, we call the unique $m\in\ints_q^*$ given by Lemma~\ref{lem:basic-squiggle-congruence}(\ref{item:one-or-three}.)\ the \emph{residue} of $x \pmod{q}$.
\end{Def}

\begin{remark}
The key usefulness of (\ref{item:one-or-three}.)\ of Lemma~\ref{lem:basic-squiggle-congruence} is that it allows us to ``pretend'' that an $x\in\rats$ is an integer, provided $|x|_p = 1$.  Fractions with unit $p$-adic norm obey essentially the same rules with respect to $\scong q$ as their residues do with respect to $\equiv_q$.
\end{remark}

Owing to Lemma~\ref{lem:basic-squiggle-congruence}(\ref{item:equiv}.), for any $x\in\rats$ and prime power $q$, we define
\begin{equation}\label{eqn:congruence-class}
[x]_q := \{ y\in\rats : y\scong q x\}\;,
\end{equation}
the equivalence class of $x$ under $\scong q$.  If $q>1$ and $|x|_p = 1$, then the residue of $x\pmod{q}$ is a natural representative element of $[x]_q$.

\begin{Lem}\label{lem:one-multiplier}
Let $p$ be prime, let $n_1,\ldots,n_m$ be arbitrary integers, and let $q_1,\ldots,q_m$ be such that $q_i = p^{n_i}$ for all $i\in[m]$.  Let $x_1,\ldots,x_m,y_1,\ldots,y_m\in\rats$ satisfy $x_i \scong{q_i} y_i$ for all $i\in [m]$.  Then there exists positive $k\in\ints$, coprime with $p$, such that $kx_i \equiv_{q_i} ky_i$ for all $i\in [m]$.
\end{Lem}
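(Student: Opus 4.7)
The plan is to take the multipliers supplied by each individual $\scong{q_i}$-relation and multiply them together to produce a single universal multiplier. Concretely, Definition~\ref{def:squiggle-congruence} applied to $x_i \scong{q_i} y_i$ gives, for each $i \in [m]$, some integer $k_i$ coprime with $p$ such that $k_i x_i \equiv_{q_i} k_i y_i$. I would then set
\[
k := \left|\prod_{i=1}^{m} k_i\right|.
\]
Since $\ints\setminus p\ints$ is closed under multiplication and absolute value, $k$ is a positive integer coprime with $p$.

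Next I would verify that this $k$ works for every index simultaneously. Fix $i\in[m]$ and write $k = \pm k_i \cdot c_i$, where $c_i := \prod_{j \ne i} k_j$ is an integer. The hypothesis $k_i x_i \equiv_{q_i} k_i y_i$ means $(k_i x_i - k_i y_i)/q_i \in \ints$; multiplying the rational $(k_ix_i-k_iy_i)/q_i$ by the integer $\pm c_i$ keeps it in $\ints$, so $(kx_i - ky_i)/q_i \in \ints$, i.e.\ $kx_i \equiv_{q_i} ky_i$. (The sign is harmless because $\equiv_{q_i}$ is preserved under negation.)

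I expect no real obstacle here: the lemma is essentially bookkeeping, combining finitely many witness multipliers into one. The only minor subtlety to flag is that the $x_i,y_i$ are rational rather than integral, so one must check that scaling an integral multiple of $q_i$ by an arbitrary integer still produces an integral multiple of $q_i$—which is immediate. The positivity requirement is handled by the absolute value, and coprimality with $p$ is preserved because $p$ is prime, so no factor of $p$ can appear in the product $\prod_i k_i$ if none appears in any $k_i$.
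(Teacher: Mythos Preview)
Your proof is correct and follows essentially the same approach as the paper: both take the individual witnesses $k_i$ from the definition of $\scong{q_i}$ and form a common multiple (you use the product $|\prod_i k_i|$, the paper uses $\lcm(k_1,\ldots,k_m)$), then observe that multiplying the congruence $k_ix_i \equiv_{q_i} k_iy_i$ by the remaining integer factor preserves it. The only difference is cosmetic; your explicit handling of signs via the absolute value is arguably a bit cleaner.
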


\begin{proof}
Let $k_1,\ldots,k_m\in\ints$ be coprime with $p$ such that $k_ix_i \equiv_{q_i} k_iy_i$ for all $i\in [m]$.  Set $k := \lcm(k_1,\ldots,k_m)$.  Then $k$ satisfies the lemma.
\end{proof}


\subsection{A characterization of inverse-square adequacy for $U_4$ in three dimensions}
\label{sec:characterization}

From now on we assume $p=2$.  We let $\oddints := 2\ints+1$ be the set of all odd integers and $\oddrats := \{a/b : a,b\in\oddints\} = \{ x\in\rats : |x|_2 = 1 \}$.

The main theorem of this section is as follows:

\begin{Thm}\label{thm:U-4}
Let $X\subseteq\reals^3$ be a set of four pairwise distinct points.  Then $X$ is isq-adequate for $U_4$ if and only if there exists a similarity $\map{s}{\reals^3}{\reals^3}$ such that
\begin{equation}\label{eqn:standard-position}
s(X) = \{(0,0,0),\,(1,0,0),\,(a/2,b/2,0),\,(c/2,d/2,e/2)\}\;,
\end{equation}
where $a,c,b^2,d^2,bd\in\oddrats$ and there exist $\ell_1,\ell_2\in\{1,3\}$, not both $3$, such that
\begin{align}
a^2+b^2 \scong{16} c^2+d^2+e^2 &\scong{16} 4\ell_1\;,  \label{eqn:abcd-again} \\
a \scong 4 c &\scong 4 \ell_2\;,  \label{eqn:ac3} \\
ac+bd &\scong 8  2(2 - \ell_2)\;. \label{eqn:acbd4}
\end{align}
Supposing this is the case:
\begin{enumerate}
\item\label{item:e}
$e^2\in\rats$ with $e^2 \scong{16} 4(3 - \ell_2)$, that is, $e^2 \scong{16} 8$ if $\ell_2 = 1$ and $e^2 \scong{16} 0$ if $\ell_2 = 3$.
\item
If $\ell_1=\ell_2=1$, then all edges are thin.
\item
If $\ell_1=1$ and $\ell_2=3$, then the thick edges form the $3$-cycle not passing through $(0,0,0)$.
\item
If $\ell_1=3$ and $\ell_2=1$, then the thick edges form the $4$-cycle that excludes the edge $\{(0,0,0),(1,0,0)\}$.
\item
There exist $t,u\in\oddrats$ and positive square-free $n\in\oddints$ with $n\equiv_8 3$ such that $b = t\sqrt n$ and $d = u\sqrt n$.
\end{enumerate}
\end{Thm}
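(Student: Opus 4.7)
The plan is to prove the claimed characterization by first normalizing the configuration via a similarity and then translating the Eulerian condition of Theorem~\ref{thm:main} into $2$-adic congruences using the framework of Section~\ref{sec:squiggle}. First I would apply a similarity to place point~$1$ at the origin, point~$2$ at $(1,0,0)$ (so $d_{12}=1$), point~$3$ in the $xy$-plane at $(a/2,b/2,0)$, and point~$4$ at $(c/2,d/2,e/2)$. I would then use the $S_4$ freedom in relabeling the four points to ensure both that the edge $\{1,2\}$ is thin---a thin edge must exist, for otherwise thick $K_4$ would give every vertex the odd degree $3$, contradicting the Eulerian part of Theorem~\ref{thm:main}---and that, if the thick subgraph happens to be a triangle, vertex~$1$ is its isolated vertex. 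These conventions leave exactly three Eulerian possibilities for the thick edges: empty, the triangle on $\{2,3,4\}$, and the $4$-cycle $\{1,3\},\{3,2\},\{2,4\},\{4,1\}$, corresponding respectively to the three allowed pairs $(\ell_1,\ell_2)\in\{(1,1),(1,3),(3,1)\}$.

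For the forward direction, isq-adequacy together with $d_{12}=1$ forces every $d_{ij}^2\in\oddrats$. Using the identities
\[
4d_{13}^2=a^2+b^2,\quad 4d_{23}^2=(2-a)^2+b^2,\quad 4d_{14}^2=c^2+d^2+e^2,\quad 4d_{34}^2=(a-c)^2+(b-d)^2+e^2,
\]
the relations $d_{13}^2-d_{23}^2=a-1$ and $d_{14}^2-d_{24}^2=c-1$ together with the $2$-adic ultrametric inequality yield $a,c\in\oddrats$; consequently $b^2=4d_{13}^2-a^2\in\oddrats$ and likewise $d^2\in\oddrats$. The identity $2(ac+bd)=4(d_{13}^2+d_{14}^2-d_{34}^2)$ combined with $|ac|_2=1$ then forces $bd\in\oddrats$. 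Now let $\ell_{ij}\in\{1,3\}$ be the residue of $d_{ij}^2\pmod 4$ (so $\ell_{ij}=1$ iff edge $\{i,j\}$ is thin, given that $\{1,2\}$ is thin). The Eulerian condition at vertices $1$ and $2$ forces $\ell_{13}=\ell_{14}=:\ell_1$ and $\ell_{23}=\ell_{24}$, whence $a\scong{4}1+\ell_{13}-\ell_{23}=1+\ell_{14}-\ell_{24}\scong{4}c$, and this common residue $\ell_2$ is seen by a three-case check to take exactly the claimed values. Promoting these mod-$4$ relations via Lemma~\ref{lem:basic-squiggle-congruence}(\ref{item:expand}.) (multiplying by $4$ or $2$ and adjusting the modulus) yields \eref{eqn:abcd-again} and \eref{eqn:ac3}, and similarly \eref{eqn:acbd4} follows from $ac+bd=2(d_{13}^2+d_{14}^2-d_{34}^2)$.

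The reverse direction is obtained by reading these steps backwards: given the standard form and the three congruences, the distance identities and the hypotheses $a,c,b^2,d^2,bd\in\oddrats$ show every $d_{ij}^2\in\oddrats$, so the $d_{ij}^{-2}$ are common odd-integer multiples of some $J>0$ (clear the odd denominators), giving item~(1.) of Theorem~\ref{thm:main}; and the mod-$4$ residues recovered from the congruences reproduce the Eulerian pattern associated to $(\ell_1,\ell_2)$, giving item~(2.). Conclusions (2)--(4) then read off immediately from the case analysis above. For~(5), $b/d=bd/d^2\in\oddrats$ shows that $b$ and $d$ are real multiples of a common $\sqrt n$; writing $b^2=nt^2$ with $n$ square-free positive odd and $t\in\oddrats$, the relation $b^2\scong{8}3$ (itself derived from $a^2+b^2\scong{16}4\ell_1$ and $a^2\scong{8}1$) forces $n\equiv_8 3$.

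Conclusion~(1) is the most delicate: combining $4d_{34}^2\scong{16}8\ell_1+4\ell_2-8$ with $(a-c)^2\scong{16}0$ (which follows from $a-c\scong{4}0$) gives $(b-d)^2+e^2\scong{16}8\ell_1+4\ell_2-8$, and then a careful mod-$16$ computation of $(b-d)^2=n(t-u)^2$ using the parametrization from~(5) together with the mod-$8$ constraint \eref{eqn:acbd4} on $ac+bd$ establishes $(b-d)^2\scong{16}12$, whence $e^2\scong{16}8\ell_1+4\ell_2-20\scong{16}4(3-\ell_2)$. This final mod-$16$ bookkeeping---requiring joint control over $a,b,c,d$ at fine $2$-adic precision, and in particular ruling out the spurious branch $(b-d)^2\scong{16}0$ via \eref{eqn:acbd4}---is the main obstacle of the proof; everything else is a methodical application of Lemma~\ref{lem:basic-squiggle-congruence} together with the three-way case check.
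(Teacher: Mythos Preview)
Your overall architecture matches the paper's: normalize by a similarity so that $s_1=(0,0,0)$, $s_2=(1,0,0)$, $s_3=(a/2,b/2,0)$, $s_4=(c/2,d/2,e/2)$ with $\{s_1,s_2\}$ thin, then convert Theorem~\ref{thm:main} into $\scong{q}$-congruences via Lemma~\ref{lem:basic-squiggle-congruence}. Two remarks are worth making.

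\textbf{A small but real gap.} Your claim ``likewise $d^2\in\oddrats$'' does not follow from the argument you give for $b^2$. From $4d_{14}^2=c^2+d^2+e^2$ and $c\in\oddrats$ you only get $d^2+e^2\in\oddrats$; the $e^2$ term does not vanish. The paper handles this in the correct order: first $b^2\in\oddrats$ and $d^2+e^2\in\oddrats$, then $bd\in\oddrats$ via the $d_{3,4}$ identity, then the square-free decomposition $b=t\sqrt n$ gives $d=(bd/(tn))\sqrt n=u\sqrt n$ with $u\in\oddrats$, and only \emph{then} does $d^2=u^2n\in\oddrats$ and $e^2\in\rats$ follow. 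Your argument for item~(5), where you write $b/d=bd/d^2$, is circular as stated; write $d/b=bd/b^2\in\oddrats$ instead and the rest goes through. This is a minor reordering, but without it your proof of item~(1) is using $e^2\in\rats$ before it has been established.

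\textbf{A genuinely different route to item~(1).} Your argument here is not the paper's. The paper first shows $e^2\scong{8}0$, so $e^2\scong{16}0$ or $8$, and then in each of the two cases $r_{3,4}=1$ and $r_{3,4}=3$ rules out the wrong residue by contradiction, splitting further into subcases $ac\scong{8}1$ versus $ac\scong{8}5$ and chasing congruences of $t^2,u^2,tu$ until \eref{eqn:acbd4} is violated. Your direct computation is cleaner: from $ac\scong{4}1$ and \eref{eqn:acbd4} you get $bd\scong{4}1$, hence $tu\scong{4}3$, hence $t-u$ has $2$-adic norm exactly $1/2$, hence $(t-u)^2\scong{16}4$ and $(b-d)^2=n(t-u)^2\scong{16}12$; combined with $(a-c)^2\scong{16}0$ and $4d_{34}^2\scong{16}8\ell_1+4\ell_2-8$ this yields $e^2\scong{16}4(3-\ell_2)$ in one stroke. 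This avoids the paper's four-way case split entirely, at the cost of needing the parametrization from item~(5) first. Both arguments are correct; yours is shorter.
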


\begin{proof}
Lemma~\ref{lem:basic-squiggle-congruence} is used extensively in this proof.  The numbers with periods in parentheses below refer to the items in that lemma.  The reverse direction of the ``if-and-only-if'' is easier, and we prove it first.  Assume the similarity $s$ is as above with $a,b^2,c,d^2,bd\in\oddrats$ and \erefs{eqn:abcd-again}--(\ref{eqn:acbd4}) satisfied.  We show that $X$ is isq-adequate for $U_4$.

Since similarities preserve isq-adequacy for $U_n$, we can ignore $s$ and assume WLOG that $X = \{p_1,p_2,p_3,p_4\}$, where $p_1 = (0,0,0)$, \ $p_2 = (1,0,0)$, \ $p_3 = (a/2,b/2,0)$, and $p_4 = (c/2,d/2,e/2)$.  For $1\le i<j\le 4$, we let $d_{i,j}$ be the distance between $p_i$ and $p_j$, and we let $J_{i,j} := d_{i,j}^{-2}$ be the corresponding coupling strength.  Thus
\begin{align*}
J_{1,2} &= 1\;, & J_{1,3} &= \frac{4}{a^2+b^2}\;, & J_{1,4} &= \frac{4}{c^2+d^2+e^2}\;, \\
& & J_{2,3} &= \frac{4}{(a-2)^2+b^2}\;, & J_{2,4} &= \frac{4}{(c-2)^2+d^2+e^2}\;, \\
& & & & J_{3,4} &= \frac{4}{(a-c)^2+(b-d)^2+e^2}\;.
\end{align*}
We have $4/J_{1,2} = 4 \scong{16} 4$, and after some manipulation using \eref{eqn:abcd-again} and (\ref{item:add}.), 
\begin{align*}
4/J_{1,3} = a^2+b^2 &\scong{16} 4\ell_1\;, & 4/J_{1,4} = c^2+d^2+e^2 &\scong{16} 4\ell_1\;, \\
4/J_{2,3} = (a-2)^2+b^2 &\scong{16} 4(1+\ell_1-a)\;, & 4/J_{2,4} = (c-2)^2+d^2+e^2 &\scong{16} 4(1+\ell_1-c)\;, \\
& & 4/J_{3,4} = (a-c)^2+(b-d)^2+e^2 &\scong{16} 8 - 2(ac+bd)\;.
\end{align*}
(For the last congruence, note that $8\ell_1 \scong{16} 8$.)  Multiplying everything in (\ref{eqn:ac3}) by $4$ and everything in (\ref{eqn:acbd4}) by $2$ gives $4a\scong{16} 4c\scong{16} 4\ell_2$ and $2(ac+bd) \scong{16} 8 - 4\ell_2$ (by (\ref{item:expand}.)).  Substituting these above, we get
\begin{align*}
4/J_{2,3} &\scong{16} 4(1+\ell_1-\ell_2)\;, & 4/J_{2,4} &\scong{16} 4(1+\ell_1-\ell_2)\;, & 4/J_{3,4}&\scong{16} 4\ell_2\;.
\end{align*}
Dividing everything by $4$ (and using (\ref{item:expand}.)\ again), we obtain
\begin{align*}
1/J_{1,2} &\scong 4 1\;, & 1/J_{1,3} \scong 4 1/J_{1,4} &\scong 4 \ell_1\;, & 1/J_{2,3} \scong 4 1/J_{2,4} &\scong 4 1+\ell_1-\ell_2\;, & 1/J_{3,4} &\scong 4 \ell_2\;.
\end{align*}
Since the right-hand sides are all in $\oddints$, it follows from (\ref{item:expand}.)\ that each $J_{i,j}$ has $2$-adic norm~$1$, i.e., is in $\oddrats$.  Also, $1/3 \scong 4 3$ by (\ref{item:recip}.), and thus by (\ref{item:recip2}.),
\begin{align*}
J_{1,2} &\scong 4 1\;, & J_{1,3} \scong 4 J_{1,4} &\scong 4 \ell_1\;, & J_{2,3} \scong 4 J_{2,4} &\scong 4 1+\ell_1-\ell_2\;, & J_{3,4} &\scong 4 \ell_2\;.
\end{align*}

Finally, by Lemma~\ref{lem:one-multiplier}, there exists a single integer $k>0$ such that
\begin{align*}
kJ_{1,2} &\equiv_4 1\;, & kJ_{1,3} \equiv_4 kJ_{1,4} &\equiv_4 \ell_1\;, & kJ_{2,3} \equiv_4 kJ_{2,4} &\equiv_4 1+\ell_1-\ell_2\;, & kJ_{3,4} &\equiv_4 \ell_2\;.
\end{align*}
Setting $J := 1/k$, we get the following possibilities for the thick edges (i.e., couplings where $J_{i,j}/J \equiv_4 3$), depending on the values of $\ell_1$ and $\ell_2$, and all are evidently isq-adequate for $U_4$:
\begin{itemize}
\item
If $\ell_1 = \ell_2 = 1$, then $J_{i,j}/J\equiv_4 1$ for all $1\le i<j\le 4$, i.e., there are no thick edges.
\item
If $\ell_1 = 1$ and $\ell_2 = 3$, then $J_{2,3}/J \equiv_4 J_{2,4}/J \equiv_4 J_{3,4} \equiv_4 3$ and the rest are $\equiv_4 1$, i.e., the thick edges form a $3$-cycle going through points $p_2,p_3,p_4$.
\item
If $\ell_1 = 3$ and $\ell_2 = 1$, then $J_{1,2}/J \equiv_4 J_{3,4}/J \equiv_4 1$, and the rest are $\equiv_4 3$, i.e., the thick edges form a $4$-cycle that excludes edges $\{p_1,p_2\}$ and $\{p_3,p_4\}$.
\end{itemize}
This proves the reverse implication in Theorem~\ref{thm:U-4}.

\bigskip

For the forward implication, we first suppose only that $\{p_1,p_2,p_3,p_4\}\subset\reals^3$
is \emph{weakly} isq-adequate for $U_4$.  The consequences of weak isq-adequacy we establish here will be used in Section~\ref{sec:noU5}.
Without loss of generality, we can assume that at least one of the edges is thin, say $\{p_1,p_2\}$; otherwise, we replace $J$ with $J/3$, which effectively multiplies all ratios $J_{i,j}/J$ by $3$, flipping the thickness of all edges.
We then let $s$ be a similarity of $\reals^3$ that maps $p_1$ to the origin $(0,0,0)$, $p_2$ to the point $(1,0,0)$ on the $x$-axis, and $p_3$ to a point $(a/2,b/2,0)$ in the $x,y$-plane, for some $a,b\in\reals$.  Such an $s$ clearly exists.  We now let $c,d,e\in\reals$ be such that (letting $s_i := s(p_i)$)
\begin{align*}
s_1 &= (0,0,0)\;, & s_2 &= (1,0,0)\;, & s_3 &= (a/2,b/2,0)\;, & s_4 = (c/2,d/2,e/2)\;.
\end{align*}
Our goal is to show that $a,b^2,c,d^2,bd\in\oddrats$ and $e^2\in\rats$, as well as establishing \erefs{eqn:abcd-again}--(\ref{eqn:acbd4}) and the statements thereafter.

The transformed configuration $\{s_1,s_2,s_3,s_4\}$ is still weakly isq-adequate for $U_4$, with $\{s_1,s_2\}$ still a thin edge.  We must have $b\ne 0$, for otherwise, $s_1,s_2,s_3$ are collinear, violating Theorem~\ref{collinearTheorem}.  For $1\le i<j\le 4$ let $d_{i,j}$ be the distance between $s_i$ and $s_j$.  The couplings $J_{i,j} := d_{i,j}^{-2}$ are thus:
\begin{align}
J_{1,2} &= 1\;, & J_{1,3} &= 4(a^2+b^2)^{-1}\;, & J_{1,4} &= 4(c^2+d^2+e^2)^{-1}\;, \label{eqn:J1} \\
& & J_{2,3} &= 4((a-2)^2+b^2)^{-1}\;, & J_{2,4} &= 4((c-2)^2+d^2+e^2)^{-1}\;, \label{eqn:J2} \\
& & & & J_{3,4} &= 4((a-c)^2+(b-d)^2+e^2)^{-1}\;. \label{eqn:J3}
\end{align}
By the weak isq-adequacy of the $s_i$ for $U_4$, there exists a real $J>0$ such that each ratio $J_{i,j}/J$ is an odd integer; moreover, $1/J = J_{1,2}/J \equiv_4 1$, since $\{s_1,s_2\}$ is a thin edge.  This implies $J\in\rats$, and hence all the $J_{i,j}$ are rational, as are $a^2+b^2$ and $c^2+d^2+e^2$ by \eref{eqn:J1}.  Expanding \eref{eqn:J2} and rearranging, we get
\begin{align*}
\frac{4}{J_{2,3}} &= (a-2)^2+b^2 = a^2+b^2 - 4a + 4\;, & \frac{4}{J_{2,4}} &= (c-2)^2+d^2 = c^2 + d^2 + e^2 - 4c + 4\;,
\end{align*}
whence $a$ and $c$ are both in $\rats$, which then puts $b^2$ and $d^2+e^2$ in $\rats$.  This allows us to use the properties of the $\approx$-congruence given in Lemma~\ref{lem:basic-squiggle-congruence} to reason about $a,b^2,c, d^2+e^2$.

Since $1/J \equiv_4 1$, we have $1/J \scong 4 1$ by (\ref{item:equiv-to-approx}.), and it then follows from (\ref{item:one-or-three}.)\ that $|1/J|_2 = 1 = |J|_2$, that is, $J\in\oddrats$.  From this it further follows that $J\scong 4 1$ by (\ref{item:recip2}.).  For each $i<j$, let $r_{i,j} := (J_{i,j}/J) \bmod{4}$.  Then $r_{i,j}\in\ints_4^* = \{1,3\}$ and $J_{i,j}/J \equiv_4 r_{i,j}$.  For a thin edge, $r_{i,j} = 1$, and for a thick edge, $r_{i,j} = 3$.  We have $r_{1,2} = 1$ by assumption.

By (\ref{item:equiv-to-approx}.), for $i<j$ we have $J_{i,j}/J \scong 4 r_{i,j}$.  Multiplying both sides by $J$, we get $J_{i,j} \scong 4 Jr_{i,j} \scong 4 r_{i,j}$ (applying (\ref{item:expand}.)\ to both $J$ and $r_{i,j}$).  We can thus replace the $J_{i,j}$ with the $r_{i,j}$ in \erefs{eqn:J1}--(\ref{eqn:J3}) above to get the $\scong 4$-congruences
\begin{align*}
r_{1,3} &\scong 4  4(a^2+b^2)^{-1}\;, & r_{1,4} &\scong 4 4(c^2+d^2+e^2)^{-1}\;, \\
r_{2,3} &\scong 4 4((a-2)^2+b^2)^{-1}\;, & r_{2,4} &\scong 4 4((c-2)^2+d^2+e^2)^{-1}\;, \\
r_{3,4} &\scong 4 4((a-c)^2+(b-d)^2+e^2)^{-1}\;.
\end{align*}
All the right-hand sides have $2$-adic norm~$1$ by (\ref{item:one-or-three}.).  Thus by (\ref{item:recip2}.)\ we can take reciprocals of everything to get
\begin{align*}
1/r_{1,3} &\scong 4  (a^2+b^2)/4\;, & 1/r_{1,4} &\scong 4 (c^2+d^2+e^2)/4\;, \\
1/r_{2,3} &\scong 4 ((a-2)^2+b^2)/4\;, & 1/r_{2,4} &\scong 4 ((c-2)^2+d^2+e^2)/4\;, \\
1/r_{3,4} &\scong 4 ((a-c)^2+(b-d)^2+e^2)/4\;.
\end{align*}
Noting that $1/r_{i,j} \scong 4 r_{i,j}$ by (\ref{item:recip}.), we can replace each $1/r_{i,j}$ above by $r_{i,j}$.  Then multiplying everything by $4$ gives (by (\ref{item:expand}.))
\begin{align}
4r_{1,3} &\scong{16}  a^2+b^2\;, & 4r_{1,4} &\scong{16} c^2+d^2+e^2\;, \label{eqn:r1} \\
4r_{2,3} &\scong{16} (a-2)^2+b^2\;, & 4r_{2,4} &\scong{16} (c-2)^2+d^2+e^2\;, \label{eqn:r2} \\
4r_{3,4} &\scong{16} (a-c)^2+(b-d)^2+e^2\;. \label{eqn:r3}
\end{align}

Expanding \erefs{eqn:r2}, then subtracting them from (\ref{eqn:r1}), then dividing everything by $4$ and rearranging, we have
\begin{align}
4r_{2,3} &\scong{16} a^2 + b^2 - 4a + 4 & 4r_{2,4} &\scong{16} c^2 + d^2 + e^2 - 4c + 4\;, \nonumber \\
4(r_{1,3} - r_{2,3}) &\scong{16} 4a - 4 & 4(r_{1,4} - r_{2,4}) &\scong{16} 4c - 4\;, \nonumber \\
a &\scong 4 r_{1,3} - r_{2,3} + 1 &  c &\scong 4 r_{1,4} - r_{2,4} + 1\;. \label{eqn:ac-odd}
\end{align}
The right-hand sides of (\ref{eqn:ac-odd}) are odd integers, so $a,c\in\oddrats$.  This implies $a^2,c^2\in\oddrats$.  Then by (\ref{item:one-or-three}.), $a^2\scong{16} m$ for some $m\in\oddints$, but then \eref{eqn:r1} gives $b^2 \scong{16} 4r_{1,3} - a^2 \scong{16} 4r_{1,3} - m \in\oddints$, whence it follows (from (\ref{item:one-or-three}.)\ again) that $b^2\in\oddrats$.  A similar argument shows that $d^2+e^2\in\oddrats$.

In a similar way, we expand \eref{eqn:r3} and combine it with (\ref{eqn:r1}) to get
\begin{align}
2ac + 2bd &\scong{16} a^2+b^2+c^2+d^2+e^2 - 4r_{3,4} \scong{16} 4(r_{1,3} + r_{1,4} - r_{3,4})\;, \nonumber \\
ac+bd &\scong 8 2(r_{1,3} + r_{1,4} - r_{3,4})\;. \label{eqn:acbd-again}
\end{align}
Thus $bd \scong 8 2(r_{1,3} + r_{1,4} - r_{3,4}) - ac$, which implies $bd\in\oddrats$ because $ac\in\oddrats$.


The following lemma is routine.

\begin{Lem}\label{lem:sq-decomp}
Every nonzero $x\in\rats$ is uniquely expressible in the form $x = t^2n$, where $t\in\rats$ and $n\in\ints$ is positive and square-free.  Also, $x\in\oddrats$ if and only if $t\in\oddrats$ and $n\in\oddints$.
\end{Lem}

\begin{proof}
Let $|x| = p_1^{e_1}\cdots p_k^{e_k}$ be the prime factorization of $|x|$, where $p_1,\ldots,p_k$ are distinct primes and $e_1,\ldots,e_k$ are nonzero integers.  Then we must have
\[ n = \prod_{i:e_i\equiv_2 1} p_i \hspace{.25in}\mbox{ and }\hspace{.25in} t = \pm\sqrt{|x|/n} = \pm p_1^{\floor{e_1/2}} \cdots p_n^{\floor{e_k/2}}\;, \]
and $t$ has the same sign as $x$.  Also, $x\in\oddrats$ if and only if all the $p_i$ are odd, if and only if $n\in\oddints$ and $t\in\oddrats$.
\end{proof}

We can now write $b^2 = t^2n$ for $t\in\oddrats$ with the same sign as $b$ and square-free positive $n\in\oddints$; whence $b = t\sqrt n$.  Since $bd\in\oddrats$, we then have
\[ d = \frac{bd}{b} = \frac{bd}{t\sqrt n} = \frac{bd}{tn}\sqrt n = u\sqrt n\;, \]
where $u := bd/tn$ is evidently in $\oddrats$.  It follows that $d^2 = u^2n\in\oddrats$, and hence $e^2 \in \rats$.  We review an important fact about $\scong q$ where $q$ is a power of~$2$.

\begin{Fact}\label{fact:power-of-2}
Let $q>1$ be a power of $2$.
For any $x,y\in\oddrats$, \ $x^2\scong{2q} y^2 \iff x \scong q \pm y$.  In particular, $x^2 \scong 8 1$ and $x^4 \scong{16} 1$ for any $x\in\oddrats$.
\end{Fact}

Fact~\ref{fact:power-of-2} is easy to verify using (\ref{item:one-or-three}.).  Applying Fact~\ref{fact:power-of-2} to $a$ and $t$ and using \eref{eqn:r1} we get
\begin{align}
a^2 + t^2n \scong{16} 4r_{1,3}
&\scong 8 4 \label{eqn:abcd}
\\
1 + n &\scong 8 4 \nonumber \\
n &\scong 8 3 \nonumber \\
n &\equiv_8 3\;, \label{eqn:n-is-3}
\end{align}
the last line by (\ref{item:int}.).  It follows immediately that
\begin{equation}\label{eqn:u2n}
b^2 \scong{8} d^2 \scong{8} t^2n \scong{8} u^2n \scong{8} 3\;.
\end{equation}
Using this, Fact~\ref{fact:power-of-2}, and \eref{eqn:r1} again, we can solve for $e^2 \pmod{16}$:
\begin{align}
c^2+u^2n+e^2 \scong{16} 4r_{1,4} &\scong{8} 4 \label{eqn:cune} \\
1+3+e^2 &\scong{8} 4 \nonumber \\
e^2 &\scong{8} 0 \nonumber \\
e^2 &\scong{16} (\mbox{$0$ or $8$})\;, \label{eqn:e}
\end{align}
the last line by (\ref{item:scale-up}.).

\begin{remark}
Everything we have established so far---that $a,c,t,u\in\oddrats$, $n\equiv_8 3$, and $e^2\scong 8 0$, as well as \erefs{eqn:r1} through (\ref{eqn:e})---depends only on weak isq-adequacy, with the additional assumption (WLOG) that $r_{1,2} = 1$.
This
will be important in Section~\ref{sec:noU5}, where we consider the inclusion of an additional point.
\end{remark}

From here on we assume that $\{s_1,s_2,s_3,s_4\}$ is isq-adequate, not just weakly so.  From this we observe that $r_{1,3} = r_{1,4}$ and $r_{2,3} = r_{2,4}$: since $\{s_1,s_2\}$ is a thin edge, it must be that the other two edges incident to $s_1$ are either both thick or both thin, i.e., $r_{1,3} = r_{1,4}$; similarly, the two edges connecting $s_2$ to $s_3$ and to $s_4$ are either both thick or both thin, so $r_{2,3} = r_{2,4}$.  Now from (\ref{eqn:ac-odd}) we get $ac \scong 4 (r_{1,3} - r_{2,3} + 1)^2 \scong 4 1$, and this implies by (\ref{item:scale-up}.)\ that either $ac \scong 8 1$ or $ac \scong 8 5$.  We also get from \eref{eqn:acbd-again} that
\begin{equation}\label{eqn:acbd-strong-isq}
ac + bd \scong 8 4 - 2r_{3,4}\;.
\end{equation}
%
%
We now show that the residue of $e^2 \pmod{16}$ depends on $r_{3,4}$ only: $8$ if $r_{3,4} = 1$ and $0$ if $r_{3,4} = 3$.

The thick edges in any isq-adequate configuration for $U_4$ must form either a $3$-cycle, a $4$-cycle, or the empty set; furthermore, we are given that $\{s_1,s_2\}$ is a thin edge ($r_{1,2} = 1$).  We now consider what the status of the edge $\{s_3,s_4\}$ tells us about $e^2$.
%
%
%
%
%
First suppose $\{s_3,s_4\}$ is a thin edge ($r_{3,4} = 1$).  (Note that this rules out a $3$-cycle.)  We show that $e^2 \scong{16} 8$.  Suppose otherwise, i.e.,
$e^2 \scong{16} 0$.
We consider the two subcases $ac\scong 8 1$ and $ac\scong 8 5$ in turn, using Fact~\ref{fact:power-of-2} several times:
\begin{description}
\item[Subcase~1:] $ac\scong 8 1$.  Then multiplying both sides by $c$, we get $a\scong 8 c$, and so $a^2 \scong{16} c^2$.  Then using \erefs{eqn:abcd} and (\ref{eqn:cune}) we get
\begin{align*}
t^2n &\scong{16} u^2n + 0 \\
t^2 &\scong{16} u^2 \\
(tu)^2 = t^2u^2 &\scong{16} u^4 \scong{16} 1 \\
tu &\scong 8 \pm 1 \\
bd = ntu &\scong 8 3tu \scong 8 \pm 3 \\
ac+bd &\scong 8 1 \pm 3 \not\scong 8 2 = 4 - 2r_{3,4}\;,
\end{align*}
which contradicts (\ref{eqn:acbd-strong-isq}).  (Recall that we are currently assuming that $r_{3,4} = 1$.)
\item[Subcase~2:] $ac\scong 8 5$.  Then $a\not\scong 8\pm c$, and so $a^2 \not\scong{16} c^2$.  It follows by \erefs{eqn:abcd} and (\ref{eqn:cune}) that $t^2n \not\scong{16} u^2n$, and dividing both sides by $n$ gives the equivalent $t^2 \not\scong{16} u^2$.  So we have $t \not\scong 8 \pm u$.  Multiplying both sides by $3u$, this is equivalent to $3tu \not\scong 8 \pm 3u^2 \scong 8 \pm 3$, whence we must have $3tu \scong 8 \pm 1$ by (\ref{item:one-or-three}.).  But $3tu \scong 8 ntu = bd$, so $bd\scong 8 \pm 1$, and this gives $ac+bd\scong 8 5\pm 1 \not\scong 8 2 = 4-2r_{3,4}$, which again contradicts (\ref{eqn:acbd-strong-isq}) when $r_{3,4} = 1$.
\end{description}
Thus in either case, we cannot have $e^2 \scong{16} 0$, and so $e^2 \scong{16} 8$ in this case.

Now suppose $\{s_3,s_4\}$ is a thick edge, i.e., $r_{3,4} = 3$.  In this case, there is a $3$-cycle of thick edges involving $s_3$, $s_4$, and either $s_1$ or $s_2$.  If the cycle involves $s_1$, then we can include in the similarity $s$ a reflection through the plane $x=1/2$, which swaps $s_1$ with $s_2$ without affecting $e$.  Thus we can assume without loss of generality that the cycle includes $s_2$, i.e., $r_{1,2} = r_{1,3} = r_{1,4} = 1$ and $r_{2,3} = r_{2,4} = r_{3,4} = 3$.  We show that $e^2 \scong{16} 0$ in this case.  Suppose otherwise, i.e., $e^2 \scong{16} 8$.  This essentially swaps the two subcases for the residue of $ac \pmod{8}$ above:
\begin{itemize}
\item If $ac\scong 8 1$, then $a^2 \scong{16} c^2$, and thus
\begin{align*}
t^2n &\scong{16} u^2n + 8 \\
t^2 &\not\scong{16} u^2 \\
(tu)^2 = t^2u^2 &\not\scong{16} u^4 \scong{16} 1 \\
tu &\not\scong 8 \pm 1 \\
bd = ntu \scong 8 3tu &\not\scong 8 \pm 3 \\
ac+bd &\not\scong 8 1 - 3 \scong 8 6 \scong 8 4 - 2r_{3,4}\;,
\end{align*}
which contradicts (\ref{eqn:acbd-strong-isq}) with $r_{3,4} = 3$.
\item
If $ac \scong 8 5$, then $a^2 \not\scong{16} c^2$, but since $a^2 \scong 8 c^2$, it must be that $c^2 \scong{16} a^2 + 8$ by (\ref{item:scale-up}.).  Then by \eref{eqn:abcd}, we get, as in Case~1 above,
\begin{align*}
a^2 + t^2n &\scong{16} c^2 + u^2n + 8 \scong{16} a^2 + 8 + u^2n + 8 \scong{16} a^2 + u^2n \\
t^2n &\scong{16} u^2n \\
&\;\;\vdots \\
ac+bd \scong 8 5 \pm 3 &\not\scong 8 6 \scong 8 4 - 2r_{3,4}\;,
\end{align*}
which contradicts (\ref{eqn:acbd-strong-isq}) with $r_{3,4} = 3$.
\end{itemize}
Thus in either subcase, $e^2\scong{16} 0$.

To summarize, we have shown that configurations in $\reals^3$ isq-adequate for $U_4$ fall into three types according to which edges are thick (i.e., which $r_{i,j} = 3$), each type corresponding to a choice of $(\ell_1,\ell_2)$ to match up with the statement of Theorem~\ref{thm:U-4}:
\begin{description}
\item[No thick edges:] Then $e^2 \scong{16} 8$, and \erefs{eqn:r1}, (\ref{eqn:ac-odd}), and (\ref{eqn:acbd-again}) give
\begin{align}
a^2+b^2 = a^2 + t^2n &\scong{16} 4 & c^2+d^2 = c^2 + u^2n &\scong{16} 12 \label{eqn:all-thin-11-1} \\
a\scong 4 c&\scong 4 1 & ac+bd = ac+tun &\scong 8 2 \label{eqn:all-thin-11-2}
\end{align}
Thus $(\ell_1,\ell_2) = (1,1)$.
\item[$3$-cycle avoiding $(0,0,0)$:] Then $e^2 \scong{16} 0$, and \erefs{eqn:r1}, (\ref{eqn:ac-odd}), and (\ref{eqn:acbd-again}) give
\begin{align}
a^2+b^2 = a^2 + t^2n &\scong{16} 4 & c^2+d^2 = c^2 + u^2n &\scong{16} 4 \label{eqn:3-cycle-13-1} \\
a\scong 4 c &\scong 4 3 & ac+bd = ac+tun &\scong 8 6 \label{eqn:3-cycle-13-2}
\end{align}
Thus $(\ell_1,\ell_2) = (1,3)$.  All planar configurations are of this type.
\item[$4$-cycle without the edge $\{(0,0,0),(1,0,0)\}$:] Then $e^2 \scong{16} 8$, and \erefs{eqn:r1}, (\ref{eqn:ac-odd}), and (\ref{eqn:acbd-again}) give
\begin{align}
a^2+b^2 = a^2 + t^2n &\scong{16} 12 & c^2+d^2 = c^2 + u^2n &\scong{16} 4 \label{eqn:4-cycle-31-1} \\
a\scong 4 c&\scong 4 1 & ac+bd = ac+tun &\scong 8 2 \label{eqn:4-cycle-31-2}
\end{align}
Thus $(\ell_1,\ell_2) = (3,1)$.
\end{description}
Here, $b = t\sqrt n$ and $d = u\sqrt n$ for $t,u\in\oddrats$ and positive square-free $n\equiv_8 3$.  

This concludes the proof of Theorem~\ref{thm:U-4}.
\end{proof}

The next corollary will be helpful in the next section to classify the solutions to \erefs{eqn:abcd-again}--(\ref{eqn:acbd4}).

\begin{Cor}\label{cor:mod-8}
For the configuration $s(X)$ given in \eref{eqn:standard-position} of Theorem~\ref{thm:U-4}, assuming $b = t\sqrt n$ and $d = u\sqrt n$ where $t,u\in\oddrats$ and $n\in\oddints$ is square-free and $n\equiv_8 3$, the isq-adequacy of $s(X)$ for $U_4$ depends \emph{only} on the residues of $a,c,t,u \pmod{8}$ and the residues of $e^2$ and $n\pmod{16}$.  That is, substituting any value $\scong{8}$ to the value of $a$, $c$, $t$, or $u$ above or substituting any square-free $n'\equiv_{16} n$ for $n$ or $(e')^2\scong{16} e^2$ for $e^2$ above preserves isq-adequacy for $U_4$.
\end{Cor}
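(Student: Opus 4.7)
The plan is to observe that isq-adequacy for $U_4$ is governed by the three algebraic conditions (\ref{eqn:abcd-again})--(\ref{eqn:acbd4}) of Theorem~\ref{thm:U-4}, and to check that each of these conditions depends only on the listed residues. After substituting $b = t\sqrt n$ and $d = u\sqrt n$, so that $b^2 = t^2n$, $d^2 = u^2n$, and $bd = tun$, the three conditions read
\begin{align*}
a^2 + t^2n &\scong{16} c^2 + u^2n + e^2 \scong{16} 4\ell_1, \\
a &\scong 4 c \scong 4 \ell_2, \\
ac + tun &\scong 8 2(2 - \ell_2).
\end{align*}
I will work through each in turn, using only the arithmetic properties of $\scong q$ from Lemma~\ref{lem:basic-squiggle-congruence} and the squaring rule of Fact~\ref{fact:power-of-2}.

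For the middle condition, $\scong 8$ implies $\scong 4$ directly from Definition~\ref{def:squiggle-congruence} (an odd multiplier witnessing $\scong 8$ also witnesses $\scong 4$), so $a \scong 4 c \scong 4 \ell_2$ is determined by $a, c \pmod 8$. For the third condition, Lemma~\ref{lem:basic-squiggle-congruence}(\ref{item:expand}) shows that multiplication by any element of $\oddrats$ preserves $\scong 8$ in both directions, so $ac \pmod 8$ is determined by $a, c \pmod 8$ while $tun \pmod 8$ is determined by $t, u \pmod 8$ together with $n \pmod 8$ (itself determined by $n \pmod{16}$); the sum is then determined by Lemma~\ref{lem:basic-squiggle-congruence}(\ref{item:add}). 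For the first condition, the key new ingredient is Fact~\ref{fact:power-of-2}: for $x \in \oddrats$, $x \scong 8 x'$ implies $x^2 \scong{16} (x')^2$, so $a^2, c^2, t^2, u^2 \pmod{16}$ are determined by $a, c, t, u \pmod 8$. Multiplying by $n$ (using Lemma~\ref{lem:basic-squiggle-congruence}(\ref{item:expand}) with $|t^2|_2 = |u^2|_2 = 1$) shows that $t^2n, u^2n \pmod{16}$ are determined by the listed residues, and adding $e^2$ preserves $\scong{16}$ by Lemma~\ref{lem:basic-squiggle-congruence}(\ref{item:add}) once more.

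Since all three conditions are determined by the claimed residues, and isq-adequacy for $U_4$ is equivalent by Theorem~\ref{thm:U-4} to the existence of some $(\ell_1, \ell_2) \in \{1,3\}^2 \setminus \{(3,3)\}$ satisfying them, any substitution matching the residues preserves isq-adequacy, and the corollary follows. There is no real obstacle here; the argument is essentially routine bookkeeping with the $\scong q$ calculus already established. The one point worth highlighting is the mismatch between the modulus $16$ in the first condition and the modulus $8$ in the given residues for $a, c, t, u$: this is precisely what the squaring rule of Fact~\ref{fact:power-of-2} is designed to bridge.
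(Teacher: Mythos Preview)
Your proposal is correct and follows the same approach as the paper: the paper's proof is a one-line assertion that the substitutions leave \erefs{eqn:abcd-again}--(\ref{eqn:acbd4}) invariant, and you have simply spelled out why this is so using Lemma~\ref{lem:basic-squiggle-congruence} and Fact~\ref{fact:power-of-2}. Your identification of Fact~\ref{fact:power-of-2} as the bridge between the mod-$8$ data on $a,c,t,u$ and the mod-$16$ condition (\ref{eqn:abcd-again}) is exactly the point, and your handling of products and sums via items~(\ref{item:expand}.)\ and (\ref{item:add}.)\ is sound.
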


\begin{proof}
Any such substitution leaves \erefs{eqn:abcd-again}--(\ref{eqn:acbd4}) invariant (or equivalently, \erefs{eqn:all-thin-11-1}--(\ref{eqn:4-cycle-31-2})).
\end{proof}


\subsection{Classifying configurations in $\reals^3$ isq-adequate for $U_4$}
\label{sec:solve}

In this section we apply Theorem~\ref{thm:U-4}
to classify all three-dimensional arrangements of four identical qubits that are isq-adequate for $U_4$.  Up to similarities and permutations of the qubits, these arrangements fall into well-defined groups, according to the integer residues in $\ints_{16}^*$ of the coordinates.

Suppose we have four points $\{s_1,s_2,s_3,s_4\}$ isq-adequate for $U_4$, with $a,b,c,d,e,t,u,n$ as in Theorem~\ref{thm:U-4}.  Plugging $b = t\sqrt n$ and $d = u\sqrt n$ into \erefs{eqn:abcd-again}--(\ref{eqn:acbd4}), we get the equivalent congruences
\begin{align}
a^2+t^2n \scong{16} c^2+u^2n+e^2 &\scong{16} 4\ell_1\;,  \label{eqn:atncun} \\
a \scong 4 c &\scong 4 \ell_2\;,  \label{eqn:ac3-again} \\
ac+tun &\scong 8  2(2-\ell_2)\;, \label{eqn:actun4} \\
e^2 &\scong{16} 4(3-\ell_2)\;, \label{eqn:e16}
\end{align}
where $(\ell_1,\ell_2) \in \{(1,1),(1,3),(3,1)\}$.  Given fixed $n$, we wish to find a succinct classification of all solutions to \erefs{eqn:atncun}--(\ref{eqn:e16}).  By Corollary~\ref{cor:mod-8} we can restrict our attention to the special case where $a,c,t,u\in\ints_8^* = \{1,3,5,7\}$ and $n\equiv_8 3$ (positive and square-free).
Any other solution to \erefs{eqn:atncun}--(\ref{eqn:e16}) is then obtained by freely substituting any element of $[a]_8$ in for $a$ and likewise for $c$, $t$, and $u$ independently, and similarly for any element of $[e^2]_{16}$ for $e^2$.  To summarize, given positive square-free $n\equiv_8 3$, every solution to \erefs{eqn:atncun}--(\ref{eqn:e16}) corresponds uniquely to a solution where $a,c,t,u\in\{1,3,5,7\}$, the correspondence obtained by the substitutions described above.  We call these latter solutions \emph{representative solutions}, of which there are clearly finitely many, given fixed $n$ and $e$.

Note that substituting any positive square-free $n'\equiv_{16} n$ for $n$ also leaves \erefs{eqn:atncun}--(\ref{eqn:e16}) invariant, thus giving the same representative solutions for $a,c,t,u,e^2$.

For the rest of this section, we assume that $a,c,t,u \in \{1,3,5,7\}$.  Restricted to these sets, $\scong 8$ coincides with equality for $a,c,t,u$.  Further, $\scong q$ can be changed to $\equiv_q$ in \erefs{eqn:atncun}--(\ref{eqn:e16}) for $q\in\{4,8,16\}$.  From \eref{eqn:ac3-again}, we get $a\equiv_4 c$, whence $ac \equiv_8 1$ if $a = c$, and $ac \equiv_8 5$ otherwise.

\subsubsection{The $e^2 \scong{16} 0$ case}
\label{sec:e-scong-0}

We first consider the case where $e^2 \scong{16} 0$, which includes all planar configurations.  By Theorem~\ref{thm:U-4}, the thick edges form a $3$-cycle passing through points $s_2,s_3,s_4$, and $(\ell_1,\ell_2) = (1,3)$.  (See \erefs{eqn:3-cycle-13-1} and (\ref{eqn:3-cycle-13-2}).)  From \eref{eqn:ac3-again}, we get $a,c\in\{3,7\}$.  Since $n\equiv_8 3$, \eref{eqn:actun4} becomes
\begin{align}
ac + 3tu &\equiv_8 6 \nonumber\\
3tu &\equiv_8 6 - ac \nonumber\\
tu &\equiv_8 3(6-ac) \equiv_8 2-3ac \nonumber\\
&\equiv_8 (\mbox{$7$ if $a = c$ and $3$ otherwise}) \nonumber\\
u &\equiv_8 (\mbox{$7t$ if $a = c$ and $3t$ otherwise}) \equiv_8 (\mbox{$-t$ if $a = c$ and $3t$ otherwise}) \label{eqn:tu}
\end{align}

We have $n \equiv_{16} (\mbox{$3$ or $11$})$.  We now consider the case where $n\equiv_{16} 3$.  The case where $n\equiv_{16} 11$ will be handled similarly.  \eref{eqn:atncun} becomes
\begin{align}
a^2 + 3t^2 &\equiv_{16} 4 & c^2 + 3u^2 &\equiv_{16} 4 \nonumber\\
t^2 &\equiv_{16} a^2 & u^2 &\equiv_{16} c^2 \nonumber\\
t &\equiv_8 \pm a\;, & u &\equiv_8 \pm c\;. \label{eqn:tu-by-ac-3}
\end{align}
The second line follows from the fact that each squared value is either $1$ or $9 \pmod{16}$.  The last line uses Fact~\ref{fact:power-of-2}.  Note that these steps are reversible; implications run in both directions.

Combining the constraints given by \erefs{eqn:tu} and (\ref{eqn:tu-by-ac-3}), we have the following possible representative solutions for $a$, $c$, $t$, and $u$ when $n\equiv_{16} 3$:
\[ \begin{array}{|c||c|c|c|c|c|c|c|c|} \hline
  & A & B & C & D & E & F & G & H \\\hline\hline
a & 3 & 3 & 3 & 3 & 7 & 7 & 7 & 7 \\\hline
c & 3 & 3 & 7 & 7 & 3 & 3 & 7 & 7 \\\hline
t & 3 & 5 & 3 & 5 & 1 & 7 & 1 & 7 \\\hline
u & 5 & 3 & 1 & 7 & 3 & 5 & 7 & 1 \\\hline
\end{array} \]
It is readily checked that each solution $A$--$H$ satisfies \erefs{eqn:atncun}--(\ref{eqn:actun4}) when $n\equiv_{16} 3$.

For $n\equiv_{16} 11$, \eref{eqn:tu} still holds, and a calculation similar to that for \eref{eqn:tu-by-ac-3} yields
\begin{align}
t^2 &\equiv_{16} a^2 + 8 & u^2 &\equiv_{16} c^2 + 8 \nonumber\\
t &\equiv_8 4\pm a & u &\equiv_8 4\pm c \label{eqn:tu-by-ac-11}
\end{align}
Using \eref{eqn:tu-by-ac-11}, we append the solutions for $n\equiv_{16} 11$ to the table above to get a complete table of $16$ representative solutions for $a,c,t,u$, split into two groups depending on $[n]_{16}$:
\begin{equation}\label{eqn:solutions-table}
 \begin{array}{|c||c|c|c|c|c|c|c|c||c|c|c|c|c|c|c|c|} \hline
  & \multicolumn{8}{c||}{n\equiv_{16} 3} & \multicolumn{8}{c|}{n\equiv_{16} 11} \\\hline
  & A & B & C & D & E & F & G & H & I & J & K & L & M & N & O & P \\\hline\hline
a & \phantom{.}3\phantom{.} & \phantom{.}3\phantom{.} & \phantom{.}3\phantom{.} & \phantom{.}3\phantom{.} & \phantom{.}7\phantom{.} & \phantom{.}7\phantom{.} & \phantom{.}7\phantom{.} & \phantom{.}7\phantom{.} & \phantom{.}3\phantom{.} & \phantom{.}3\phantom{.} & \phantom{.}3\phantom{.} & \phantom{.}3\phantom{.} & \phantom{.}7\phantom{.} & \phantom{.}7\phantom{.} & \phantom{.}7\phantom{.} & \phantom{.}7\phantom{.} \\\hline
c & 3 & 3 & 7 & 7 & 3 & 3 & 7 & 7 & 3 & 3 & 7 & 7 & 3 & 3 & 7 & 7 \\\hline
t & 3 & 5 & 3 & 5 & 1 & 7 & 1 & 7 & 1 & 7 & 1 & 7 & 3 & 5 & 3 & 5 \\\hline
u & 5 & 3 & 1 & 7 & 3 & 5 & 7 & 1 & 7 & 1 & 3 & 5 & 1 & 7 & 5 & 3 \\\hline
\end{array}
\end{equation}

Applying this classification yields the following:

\begin{Prp}\label{prop:no-trapezoid}
The corners of a trapezoid are never isq-adequate for $U_4$, that is, no configuration isq-adequate for $U_4$ can have two parallel edges.
\end{Prp}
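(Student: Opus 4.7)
The plan is to derive a contradiction from the assumption that some isq-adequate configuration for $U_4$ has two parallel edges.  By Theorem~\ref{collinearTheorem}, such edges cannot share a vertex (else three points would be collinear), so they form a perfect matching on the four points.  Since isq-adequacy and parallelism are both preserved under similarities, and since any chosen matching may be relabeled to the canonical pair $\{s_1 s_2,\, s_3 s_4\}$, I may place the configuration in the standard form of Theorem~\ref{thm:U-4}---$s_1 = (0,0,0)$, $s_2 = (1,0,0)$, $s_3 = (a/2,b/2,0)$, $s_4 = (c/2,d/2,e/2)$---with $s_1 s_2 \parallel s_3 s_4$.

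This parallelism forces $s_4 - s_3 = ((c-a)/2,\, (d-b)/2,\, e/2)$ to be a scalar multiple of $(1,0,0)$, giving $d = b$ and $e = 0$.  Since $e^2 = 0 \scong{16} 0$, Theorem~\ref{thm:U-4} places us in the $(\ell_1,\ell_2) = (1,3)$ regime, so $a \scong 4 c \scong 4 3$, and by \eref{eqn:tu} of Section~\ref{sec:e-scong-0} we have $u \scong 8 -t$ when $a \scong 8 c$ and $u \scong 8 3t$ otherwise, where $b = t\sqrt n$ and $d = u\sqrt n$ with $t,u \in \oddrats$.

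The condition $d = b$ translates to $u = t$ in $\oddrats$.  Substituting into \eref{eqn:tu}, the first subcase yields $t \scong 8 -t$ and the second $t \scong 8 3t$; in either case $2t \scong 8 0$.  This is impossible for $t \in \oddrats$: by Lemma~\ref{lem:basic-squiggle-congruence}(\ref{item:zero}.), $2t \scong 8 0$ would require $|2t|_2 \le 1/8$, whereas $|2t|_2 = |2|_2 \cdot |t|_2 = 1/2$.  The contradiction will complete the proof.

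I do not foresee any substantive obstacle; the argument is a short reduction to modular arithmetic in $\oddrats$ via Theorem~\ref{thm:U-4} and the mod-$8$ relation \eref{eqn:tu}.  The only point deserving explicit justification is that relabeling can always move the parallel matching into the canonical form $\{s_1 s_2, s_3 s_4\}$ compatibly with the standard-form similarity; this is legitimate because a similarity has enough freedom (translation, rotation, reflection, and one positive scaling) to send any three non-collinear points of the original configuration to the prescribed positions of $s_1, s_2, s_3$.
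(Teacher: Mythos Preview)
Your approach is essentially the paper's—you invoke \eqref{eqn:tu} directly rather than reading off the table \eqref{eqn:solutions-table}—but there is a real gap in the labeling step that your final paragraph does not close.  Theorem~\ref{thm:U-4} asserts only that \emph{some} similarity realizes the congruences \erefs{eqn:abcd-again}--(\ref{eqn:acbd4}), and its proof shows that this similarity carries a \emph{thin} edge to $\{s_1,s_2\}$.  The fact that a similarity has enough geometric freedom to send any three non-collinear points to prescribed positions is beside the point: if the parallel edge you place at $\{s_1,s_2\}$ happened to be thick, the congruence chain you rely on (and hence \eqref{eqn:tu}) would simply not be available.  Rescaling $J\mapsto J/3$ does not help either, since on four vertices it destroys the Eulerian condition needed for the isq-adequate part of the argument.

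The missing ingredient is exactly what the paper supplies first: four points admitting two disjoint parallel edges are automatically coplanar, so \emph{any} standard form from Theorem~\ref{thm:U-4} has $e=0$, forcing $\ell_2=3$, whence the thick edges form a $3$-cycle (item~3 of the theorem).  A $3$-cycle misses one vertex, so every perfect matching—in particular your parallel one—contains at least one thin edge, and \emph{that} edge may legitimately be taken as $\{s_1,s_2\}$.  With this one observation inserted before you place the points, your proof is correct and coincides with the paper's.
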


\begin{proof}
Since a trapezoid is planar, if its four corners are isq-adequate for $U_4$, they satisfy the $e\scong{16} 0$ case, whence the thick edges form a $3$-cycle.  It follows that one of the two parallel edges is thin, and so the points can be mapped by a similarity to
\[ \{(0,0),(1,0),(a,t\sqrt n),(c,u\sqrt n)\}\;, \]
where $\{(0,0),(1,0)\}$ is one of the parallel edges, and where $a,c,t,u$ have residue classes (mod~$8$) given by one of the columns in \eref{eqn:solutions-table}.  As the edge $\{(a,t\sqrt n),(c,u\sqrt n)\}$ is parallel to $\{(0,0),(1,0)\}$, we have $t = u$, but there is no column of \eref{eqn:solutions-table} where $t$ and $u$ have the same residue, making this configuration impossible.
\end{proof}

The solutions given in (\ref{eqn:solutions-table}) are not all qualitatively distinct, due to some geometrical symmetries that preserve edge thickness.  An obvious one is to reflect in the $xy$-plane, negating $e$ but leaving all other values unaltered.  This leaves all the solution types $A$--$P$ unchanged.  If we reflect the points in the $xz$-plane, however, we negate $b$ and $d$ (and $t$ and $u$) simultaneously.
This leaves \erefs{eqn:abcd-again}--(\ref{eqn:acbd4}) invariant and allows us to group solutions $A$--$P$ into pairs of equivalent solutions: $A\leftrightarrow B$, $C\leftrightarrow D$, $E\leftrightarrow F$, $G\leftrightarrow H$, $I\leftrightarrow J$, $K\leftrightarrow L$, $M\leftrightarrow N$, $O\leftrightarrow P$, as shown below:
%
\begin{equation}\label{eqn:solutions-table-grouped}
 \begin{array}{|c||cc|cc|cc|cc||cc|cc|cc|cc|} \hline
  & \multicolumn{8}{c||}{n\equiv_{16} 3} & \multicolumn{8}{c|}{n\equiv_{16} 11} \\\hline
  & \multicolumn{2}{c|}{AB} & \multicolumn{2}{c|}{CD} & \multicolumn{2}{c|}{EF} & \multicolumn{2}{c||}{GH} & \multicolumn{2}{c|}{IJ} & \multicolumn{2}{c|}{KL} & \multicolumn{2}{c|}{MN} & \multicolumn{2}{c|}{OP} \\\hline\hline
a & 3 & 3 & 3 & 3 & 7 & 7 & 7 & 7 & 3 & 3 & 3 & 3 & 7 & 7 & 7 & 7 \\\hline
c & 3 & 3 & 7 & 7 & 3 & 3 & 7 & 7 & 3 & 3 & 7 & 7 & 3 & 3 & 7 & 7 \\\hline
t & 3 & 5 & 3 & 5 & 1 & 7 & 1 & 7 & 1 & 7 & 1 & 7 & 3 & 5 & 3 & 5 \\\hline
u & 5 & 3 & 1 & 7 & 3 & 5 & 7 & 1 & 7 & 1 & 3 & 5 & 1 & 7 & 5 & 3 \\\hline
\end{array}
\end{equation}

A typical example solution (planar, of type $GH$) has
\begin{align*}
s_1 &= (0,0,0)\;, & s_2 &= (1,0,0)\;, & s_3 &= (-1/2,\sqrt 3/2,0)\;, & s_4 &= (-1/2,-\sqrt 3/2,0)\;.
\end{align*}
The thick edges form an equilateral triangle with side length $\sqrt 3$ in the $xy$-plane centered at the origin, shown rotated on the left in Figure~\ref{fig:four-points}.  (The right configuration is also of type $GH$.)  This solution is particularly nice; it has the smallest possible upper bound on the ratios $J_{i,j}/J$ of any planar solution: $9$ for the thin edges, $3$ for the thick ones.

Another possible symmetry is to rotate about the $x$-axis to bring $s_4$ into the $xy$-plane, then swap the roles of $s_3$ and $s_4$.  This symmetry fixes $s_1$ and $s_2$ while mapping $s_3 = (a,b,0)/2 \mapsto (c',d',e')/2$ and $s_4 = (c,d,e)/2 \mapsto (a',b',0)/2$, where (letting $r := \sqrt{d^2+e^2}$)
\begin{align*}
a' &= c & b' &= r & c' &= a & d' &= bd/r & e' &= -be/r
\end{align*}
(There are actually two rotations that can accomplish this, differing by $180^\circ$.  We have arbitrarily chosen the one that makes $b' > 0$.)  This transformation leaves \erefs{eqn:abcd-again}--(\ref{eqn:acbd4}) invariant, but may change the residue of $n \pmod{16}$, due to $n$ having to stay square-free.  For example, if $d = \sqrt 3$ (so $n=3$) and $e = 4\sqrt 6$, then $(b')^2 = r^2 = 99$, which is not square-free; we must write $b' = 3\sqrt{11}$ instead, making the value of $n$ change to $11$.  On the other hand, for planar configurations, $e = 0$ and no rotation is needed; we merely swap $s_3$ with $s_4$.  In this case the required value of $n$ stays the same, and we can coalesce pairs $CD\leftrightarrow EF$ and $KL\leftrightarrow MN$ into single groups.

\subsubsection{Other symmetries}

This subsection may be skipped as nothing else in the paper depends on it.

There are other possible similarities that preserve edge thickness.  A unique rotation about the $z$-axis followed by a unique dilation maps the point $s_3 = (a,t\sqrt n,0)/2$ to the point $s_2 = (1,0,0)$.  This similarity, it can be shown, preserves the required value of $n$, fixes the origin $s_1$, maps $s_2$ to $(a',t'\sqrt n,0)/2$ and $s_4 = (c,u\sqrt n,e)/2$ to $(c',u'\sqrt n,e)/2$, where (letting $\gamma := 4/(a^2+b^2) = 4/(a^2+t^2n)$)
\begin{align}
a' &= \gamma a\;, & c' &= \gamma\left(\frac{ac + tun}{2}\right)\;, & t' &= -\gamma t\;, & u' &= \gamma\left(\frac{au - ct}{2}\right)\;. \label{eqn:gamma}
\end{align}
By \eref{eqn:atncun} and Lemma~\ref{lem:basic-squiggle-congruence}(\ref{item:expand}.), $\gamma \scong 4 1/\gamma \scong 4 1$.  Likewise, $a',c',t',u'$ are only determined up to $\scong 4$-congruence if $a,c,t,u$ are only known up to $\scong 8$-congruence.  This is due to the divisions by $2$ (or by $4$ in the case of $\gamma$) above.  For the same reason, determining $a'c',t',u'$ up to $\scong 8$-congruence depends on what $a,c,t,u$ are up to $\scong{16}$-congruence.  Thus values of $a,c,t,u$ with the same representative solution may map to different representative solutions under this similarity.

From \eref{eqn:gamma} above, one can see that adjusting $a$, say, by adding $\pm 8$ to it will adjust the residue (mod~$8$) of $\gamma$, $a'$, and $t'$ by $\pm 4$, leaving the residues (mod~$8$) of $c'$ and $u'$ unchanged, leading to a different representative solution.  This is even though adjusting $a$ by $\pm 8$ does not change its residue (mod~$8$).  Adjusting $t$ by $\pm 8$ causes the exact same adjustments to these residues as the $a$-adjustment.  Similarly, adjusting either $c$ or $u$ by $\pm 8$ causes an adjustment of $\pm 4$ of the residues $c'$ and $u'$ (mod~$8$) while leaving those of $\gamma$, $a'$, and $t'$ unchanged.  Depending on the residues of $a,c,t,u$ (mod~$16$), one can show that:
\begin{itemize}
\item
For $n\equiv_{16} 3$, any representative solution in the set $\{A,C,F,H\}$ for $a,c,t,u$ may map to any representative solution in the set $\{B,D,E,G\}$ for $a',c',t',u'$ and vice versa.
\item
For $n\equiv_{16} 11$, any representative solution in the set $\{I,K,N,P\}$ for $a,c,t,u$ may map to any representative solution in the set $\{J,L,M,O\}$ for $a',c',t',u'$ and vice versa.
\end{itemize}
These are the only possible mappings, and they were found with the aid of a computer.

\subsubsection{The $e^2\scong{16} 8$ cases}
\label{sec:e-scong-8}

Here we forgo detailed derivations (which are similar to the previous section) and only give the key results along with two examples.  There are two possible configurations of thick edges---either none or a $4$-cycle.  In either case, $\ell_2 = 1$ and $\{s_1,s_2\}$ and $\{s_3,s_4\}$ are thin.  We have $a,c\in\{1,5\}$ and $ac+3tu \scong 8 2$ in both cases by \erefs{eqn:all-thin-11-2} and (\ref{eqn:4-cycle-31-2}), which implies (cf.~\eref{eqn:tu})
\begin{equation}\label{eqn:tu-again}
u  \equiv_8 (\mbox{$3t$ if $a = c$ and $-t$ otherwise})
\end{equation}

If there are no thick edges, then \eref{eqn:all-thin-11-1} implies
\begin{align*}
t &\equiv_8 \begin{cases}
\pm a & \mbox{if $n\equiv_{16} 3$,} \\
4 \pm a & \mbox{if $n\equiv_{16} 11$,}
\end{cases}
&
u &\equiv_8 \begin{cases}
4\pm c & \mbox{if $n\equiv_{16} 3$,} \\
\pm c  & \mbox{if $n\equiv_{16} 11$.}
\end{cases}
\end{align*}
The table below gives the representative solutions when there are no thick edges.  Solutions are grouped together that represent reflections of each other in the $xz$-plane.
\begin{equation}\label{eqn:solutions-all-thin-table-grouped}
 \begin{array}{|c||cc|cc|cc|cc||cc|cc|cc|cc|} \hline
  & \multicolumn{8}{c||}{n\equiv_{16} 3} & \multicolumn{8}{c|}{n\equiv_{16} 11} \\\hline
  & \multicolumn{2}{c|}{AB} & \multicolumn{2}{c|}{CD} & \multicolumn{2}{c|}{EF} & \multicolumn{2}{c||}{GH} & \multicolumn{2}{c|}{IJ} & \multicolumn{2}{c|}{KL} & \multicolumn{2}{c|}{MN} & \multicolumn{2}{c|}{OP} \\\hline\hline
a & 1 & 1 & 1 & 1 & 5 & 5 & 5 & 5 & 1 & 1 & 1 & 1 & 5 & 5 & 5 & 5 \\\hline
c & 1 & 1 & 5 & 5 & 1 & 1 & 5 & 5 & 1 & 1 & 5 & 5 & 1 & 1 & 5 & 5 \\\hline
t & 1 & 7 & 1 & 7 & 3 & 5 & 3 & 5 & 3 & 5 & 3 & 5 & 1 & 7 & 1 & 7 \\\hline
u & 3 & 5 & 7 & 1 & 5 & 3 & 1 & 7 & 1 & 7 & 5 & 3 & 7 & 1 & 3 & 5 \\\hline
\end{array}
\end{equation}
The best example of an all-thin-edge solution
is the regular tetrahedron, which is of type $AB$:
\begin{align*}
s_1 &= (0,0,0)\;, & s_2 &= (1,0,0)\;, & s_3 &= \left(\frac{1}{2},\frac{\sqrt 3}{2},0\right) & s_4 &= \left(\frac{1}{2},\frac{\sqrt 3}{6},\frac{\sqrt 6}{3}\right)
\end{align*}
All edges have unit length, so we take $J := 1$.

\erefs{eqn:all-thin-11-1} and (\ref{eqn:4-cycle-31-1}) are symmetric in the sense that we get one from the other by swapping points $s_3$ with $s_4$, i.e., $(a,t)$ with $(c,u)$.  Also, each of \eref{eqn:all-thin-11-2} and \eref{eqn:4-cycle-31-2} is by itself invariant under the same swap.  Thus the representative solutions for the thick edges forming a $4$-cycle are obtained from \eref{eqn:solutions-all-thin-table-grouped} via this swap:
\begin{equation}\label{eqn:solutions-4-cycle-table-grouped}
 \begin{array}{|c||cc|cc|cc|cc||cc|cc|cc|cc|} \hline
  & \multicolumn{8}{c||}{n\equiv_{16} 3} & \multicolumn{8}{c|}{n\equiv_{16} 11} \\\hline
  & \multicolumn{2}{c|}{AB} & \multicolumn{2}{c|}{CD} & \multicolumn{2}{c|}{EF} & \multicolumn{2}{c||}{GH} & \multicolumn{2}{c|}{IJ} & \multicolumn{2}{c|}{KL} & \multicolumn{2}{c|}{MN} & \multicolumn{2}{c|}{OP} \\\hline\hline
a & 1 & 1 & 1 & 1 & 5 & 5 & 5 & 5 & 1 & 1 & 1 & 1 & 5 & 5 & 5 & 5 \\\hline
c & 1 & 1 & 5 & 5 & 1 & 1 & 5 & 5 & 1 & 1 & 5 & 5 & 1 & 1 & 5 & 5 \\\hline
t & 3 & 5 & 5 & 3 & 7 & 1 & 1 & 7 & 1 & 7 & 7 & 1 & 5 & 3 & 3 & 5 \\\hline
u & 1 & 7 & 3 & 5 & 1 & 7 & 3 & 5 & 3 & 5 & 1 & 7 & 3 & 5 & 1 & 7 \\\hline
\end{array}
\end{equation}
(The columns have been rearranged to have the same $(a,c)$ order.\footnote{It is interesting to note that (\ref{eqn:solutions-all-thin-table-grouped}) and (\ref{eqn:solutions-4-cycle-table-grouped}) are obtained from each other by swapping the residues of $n\pmod{16}$.})  A good example of a $4$-cycle solution to \erefs{eqn:4-cycle-31-1} and (\ref{eqn:4-cycle-31-2}) is an elongated tetrahedron, this one being of type $IJ$:
\begin{align*}
s_1 &= (0,0,0)\;, & s_2 &= (1,0,0)\;, & s_3 &= \left(\frac{1}{2},\frac{\sqrt{11}}{2},0\right) & s_4 &= \left(\frac{1}{2},\frac{9\sqrt{11}}{22},\frac{\sqrt{110}}{11}\right)
\end{align*}
The edges $\{s_1,s_2\}$ and $\{s_3,s_4\}$ have unit length, and the rest have lenth $\sqrt 3$.  We take $J := 1/9$.


\section{Isq-Adequacy for $U_3$ of Configurations in $\reals^2$}
\label{sec:U-3}

In this section we give characterizations of $3$-point configurations isq-adequate for $U_3$.  These characterizations are obtained using techniques similar to those used in Section~\ref{sec:e-scong-0}, and we only state the results here.
The following theorem is analogous to Theorem~\ref{thm:U-4} and has a similar proof, which we omit.

\begin{Thm}\label{thm:U-3}
Let $X\subseteq\reals^2$ be a set of three pairwise distinct points.  Then $X$ is isq-adequate for $U_3$ if and only if there exists a similarity $\map{s}{\reals^2}{\reals^2}$ such that $s(X) = \{(0,0), (1,0), (a/2,b/2)\}$, where $a,b^2\in\oddrats$ and
\begin{align}
a^2+b^2 &\scong{16} 4\;, & a &\scong 4 1\;. \label{eqn:U-3}
\end{align}
Supposing this is the case, all edges are thin, and there exist $t\in\oddrats$ and positive square-free $n\in\oddints$ with $n\equiv_8 3$ such that $b = t\sqrt n$.
\end{Thm}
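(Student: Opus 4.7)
The plan is to mirror the proof of Theorem~\ref{thm:U-4}, noting that the situation is considerably simpler since there are only three points. For three vertices, the Eulerian condition from Theorem~\ref{thm:main} leaves only two possibilities for the thick-edge graph: the empty graph (all edges thin) or the full $3$-cycle (all edges thick). Since replacing $J$ by $J/3$ multiplies every $J_{i,j}/J$ by $3$ and thereby flips the parity of each $g_{i,j}$, the all-thick case reduces to the all-thin case. Hence it suffices to characterize isq-adequacy under the assumption that every edge is thin, which will also establish the ``all edges are thin'' conclusion.

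For the reverse direction, given $s(X) = \{(0,0),(1,0),(a/2,b/2)\}$ with $a,b^2\in\oddrats$ satisfying \eref{eqn:U-3}, I would compute the three couplings $J_{1,2} = 1$, $J_{1,3} = 4/(a^2+b^2)$, and $J_{2,3} = 4/((a-2)^2+b^2)$. Using $(a-2)^2+b^2 = (a^2+b^2) - 4(a-1)$ together with \eref{eqn:U-3} gives $4/J_{2,3} \scong{16} 4$ as well. Then the machinery of Lemma~\ref{lem:basic-squiggle-congruence}, specifically the expansion/reciprocal rules (items~(\ref{item:expand}.) and~(\ref{item:recip2}.)), yields $J_{i,j} \scong 4 1$ for every pair, and Lemma~\ref{lem:one-multiplier} furnishes a common integer $k>0$ with $kJ_{i,j}\equiv_4 1$ for all $i<j$. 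Taking $J := 1/k$ certifies isq-adequacy.

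For the forward direction, assume $X$ is weakly isq-adequate with the edge $\{p_1,p_2\}$ thin (arrange this WLOG by the $J\mapsto J/3$ trick above). Apply a similarity sending $p_1\mapsto(0,0)$, $p_2\mapsto(1,0)$, $p_3\mapsto(a/2,b/2)$. By Theorem~\ref{collinearTheorem}, $b\ne 0$. Thinness of $\{p_1,p_2\}$ forces $1/J\equiv_4 1$, hence $J\in\oddrats$ by Lemma~\ref{lem:basic-squiggle-congruence}(\ref{item:one-or-three}.). The rationality of $J_{1,3}$ and $J_{2,3}$ and the identity $(a-2)^2+b^2-(a^2+b^2) = -4(a-1)$ put $a$ and $b^2$ into $\rats$, licensing $\scong q$-reasoning about them. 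Following the $U_4$ template, set $r_{i,j} := (J_{i,j}/J)\bmod 4$, invert, multiply by $4$, and expand to obtain $a^2+b^2 \scong{16} 4$ and $(a-2)^2+b^2 \scong{16} 4$; subtracting gives $a \scong 4 1$, which forces $a\in\oddrats$ and then $b^2\in\oddrats$. Applying Lemma~\ref{lem:sq-decomp} decomposes $b = t\sqrt n$ with $t\in\oddrats$ and $n$ positive square-free and odd; substituting $b^2 = t^2n$ into $a^2+t^2n\scong{16}4$ and using Fact~\ref{fact:power-of-2} ($a^2,t^2\scong 8 1$) yields $1+n\scong 8 4$, i.e., $n\equiv_8 3$.

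The main obstacle is purely organizational rather than conceptual: all the nontrivial $p$-adic bookkeeping has been set up in Section~\ref{sec:squiggle} and exercised in the proof of Theorem~\ref{thm:U-4}, so the only real content here is handling the all-thick case by rescaling $J$ and then running a stripped-down version of that argument with the fourth point removed. In particular, there is no analogue of the subtle $e^2\pmod{16}$ case analysis from the $U_4$ proof, since dimension two forbids a fourth coordinate.
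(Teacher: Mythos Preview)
Your proposal is correct and follows exactly the approach the paper indicates: the paper omits the proof of Theorem~\ref{thm:U-3} entirely, saying only that it ``is analogous to Theorem~\ref{thm:U-4} and has a similar proof,'' and that is precisely what you do---strip out the fourth point and the $e$-coordinate from the $U_4$ argument, and use the $J\mapsto J/3$ rescaling to reduce the all-thick Eulerian case to the all-thin one.

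One small wording wrinkle: in your forward-direction paragraph you say ``assume $X$ is weakly isq-adequate with the edge $\{p_1,p_2\}$ thin,'' but then immediately write $a^2+b^2\scong{16}4$ and $(a-2)^2+b^2\scong{16}4$ rather than $4r_{1,3}$ and $4r_{2,3}$; that already uses $r_{1,3}=r_{2,3}=1$, i.e., the full isq-adequacy reduction to all-thin that you established in your first paragraph, not merely weak isq-adequacy with one thin edge. Just say ``assume $X$ is isq-adequate and, by the reduction above, WLOG all edges are thin'' and the exposition matches what you actually use.
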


As with $U_4$, isq-adequacy for $U_3$ only depends on $[a]_8$, $[t]_8$, and $[n]_{16}$.  The possible representative solutions (mod~$8$) to \eref{eqn:U-3} for the two possibilities for $[n]_{16}$ are given by the following table.  Solutions that are symmetric under reflection in the $x$-axis (i.e., swapping $t$ with $-t$) are grouped together.
\begin{equation}\label{eqn:U3-solutions-table-grouped}
 \begin{array}{|c||cc|cc||cc|cc|} \hline
  & \multicolumn{4}{c||}{n\equiv_{16} 3} & \multicolumn{4}{c|}{n\equiv_{16} 11} \\\hline
  & \multicolumn{2}{c|}{AB} & \multicolumn{2}{c||}{CD} & \multicolumn{2}{c|}{EF} & \multicolumn{2}{c|}{GH} \\\hline\hline
a & 1 & 1 & 5 & 5 & 1 & 1 & 5 & 5 \\\hline
t & 1 & 7 & 3 & 5 & 3 & 5 & 1 & 7 \\\hline
\end{array}
\end{equation}

\section{No $5$-Point Configuration in $\reals^3$ Is Weakly Isq-Adequate for $U_5$}
\label{sec:noU5}

In this section we prove the following theorem:

\begin{Thm}\label{thm:noU5}
No set of five points in $\reals^3$ is weakly isq-adequate for $U_5$.
\end{Thm}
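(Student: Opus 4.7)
The plan is to derive a contradiction by combining the coordinate structure forced by the portion of the proof of Theorem~\ref{thm:U-4} that depends only on weak isq-adequacy (the consequences isolated in the Remark preceding the case analysis on $e^2$) with a parity argument on the three non-base points.

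Suppose $\{s_1,\ldots,s_5\}\subseteq\reals^3$ is weakly isq-adequate for $U_5$. After a similarity normalizing $d(s_1,s_2)=1$, each $D_{i,j}:=d(s_i,s_j)^2$ lies in $\oddrats$. By Observation~\ref{obs:weak-subset} and Theorem~\ref{collinearTheorem}, no three points are collinear, so one further similarity places $s_1=(0,0,0)$, $s_2=(1,0,0)$, and $s_3=(a_3/2,b_3/2,0)$ with $b_3>0$. Applying that portion of the proof of Theorem~\ref{thm:U-4} to $\{s_1,s_2,s_3,s_4\}$ and to $\{s_1,s_2,s_3,s_5\}$---rescaling $J$ by $1/3$ inside each 4-subset if necessary to make $\{s_1,s_2\}$ thin there---produces a single positive square-free $n\in\oddints$ with $n\equiv_8 3$, determined by $s_3$ via $b_3=t_3\sqrt n$, so that for $i\in\{4,5\}$
\[ s_i=\bigl(c_i/2,\ u_i\sqrt n/2,\ e_i/2\bigr),\qquad a_3,t_3,c_i,u_i\in\oddrats,\ e_i^2\in\rats,\ e_i^2\scong 8 0. \]

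Next I would analyze the three squared distances $D_{i,j}$ with $\{i,j\}\subseteq\{3,4,5\}$. Writing $x_k,y_k$ for the coefficients $(a_3,t_3)$ or $(c_k,u_k)$ and $z_k$ for the $e$-coordinate of $s_k$ (with $z_3:=0$),
\[ 4D_{i,j}=(x_i-x_j)^2+(y_i-y_j)^2\,n+(z_i-z_j)^2. \]
Since $x_k,y_k\in\oddrats$, both $v_2((x_i-x_j)^2)$ and $v_2((y_i-y_j)^2 n)$ are at least $2$. For the pair in $\{4,5\}$, rationality of $D_{i,j}$ forces $e_ie_j\in\rats$, and then $v_2(e_i^2),v_2(e_j^2)\ge 3$ give $v_2((e_i-e_j)^2)\ge 3$; for pairs involving $s_3$ the $z$-term is just $e_k^2$, also with $v_2\ge 3$. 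For $D_{i,j}\in\oddrats$ one needs $v_2(4D_{i,j})=2$, which rules out both the case where both $xy$-terms have $v_2\ge 3$ and the case where both equal~$2$: in the latter, writing the two terms as $4A,4B$ with $A\scong 8 1$ and $B\scong 8 3$ (using $n\equiv_8 3$), their sum $4(A+B)$ satisfies $A+B\scong 8 4$, giving $v_2\ge 4$. So for each such pair, exactly one of $x_i\scong 4 x_j$ and $y_i\scong 4 y_j$ holds.

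Finally, for $k\in\{3,4,5\}$ let $R_k,S_k\in\{1,3\}$ denote the $\scong 4$-residues of $x_k$ and $y_k$, and set $T_k:=[R_k{=}1]\oplus[S_k{=}1]\in\{0,1\}$. The ``exactly one match'' condition on the pair $(i,j)$ unpacks as $T_i\neq T_j$, so $T_3,T_4,T_5$ would have to be pairwise distinct in $\{0,1\}$, a contradiction. The main obstacle is the $2$-adic bookkeeping in the middle step: because $e_4$ and $e_5$ may be irrational, one must first deduce $e_ie_j\in\rats$ from the rationality of $D_{i,j}$ before the estimate $v_2((e_i-e_j)^2)\ge 3$ is available; once that is in hand, the pigeonhole finish is immediate.
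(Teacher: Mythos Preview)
Your argument is correct and takes a genuinely different route from the paper's. Both proofs share the same opening: using the weak-isq-adequacy portion of the proof of Theorem~\ref{thm:U-4} (the material up through \eref{eqn:e}) on the $4$-subsets $\{s_1,s_2,s_3,s_4\}$ and $\{s_1,s_2,s_3,s_5\}$ to write $s_3,s_4,s_5$ with $x$- and $y$-coefficients in $\oddrats$, the $y$-coordinates all carrying the same $\sqrt n$ with $n\equiv_8 3$, and $e_k^2\scong 8 0$. From there they diverge. The paper expands the three remaining squared distances to obtain the bilinear congruences $ac+bd\scong 4 2$, $af+bg\scong 4 2$, $cf+dg\scong 4 2$, rewrites each as a multiplicative relation $ac\scong 4 tun$ etc., multiplies two of them and cancels squares (via Fact~\ref{fact:power-of-2}) to force $n\scong 4 1$, contradicting $n\equiv_8 3$. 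You instead analyze $v_2(4D_{i,j})$ directly for the three pairs in $\{3,4,5\}$: the $z$-contribution has $v_2\ge 3$, and the two ``$xy$'' contributions each have $v_2$ equal to $2$ or at least $4$ according to whether the corresponding residues mod~$4$ disagree or agree; when both equal $2$ you use $n\equiv_8 3$ to see the sum is $4(A+B)$ with $A\scong 8 1$, $B\scong 8 3$, hence $v_2\ge 4$. This forces exactly one residue-match per pair, which you encode as $T_i\ne T_j$ for a bit $T_k$ attached to each of $s_3,s_4,s_5$, and the pigeonhole finishes. Your approach is a bit more combinatorial and avoids the multiplicative cancellation step; the paper's is more algebraic but reaches the contradiction via the single clean statement $n\scong 4 1$. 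One small cosmetic point: the rescaling of $J$ by $1/3$ to make $\{s_1,s_2\}$ thin is done once for the whole $5$-point configuration, not separately for each $4$-subset.
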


\begin{proof}
Suppose $\{p_1,\ldots,p_5\}\subseteq\reals^3$ is weakly isq-adequate for $U_5$, with couplings $\{J_{ij} : 1\le i<j\le 5\}$ such that each ratio $J_{ij}/J\in\ints$ is odd, for some fixed $J>0$.  As in the proof of the forward implication of Theorem~\ref{thm:U-4}, there is a similarity that maps $p_1,\ldots,p_4$ to the points $s_1 = (0,0,0)$, $s_2 = (1,0,0)$, $s_3 = (a/2,b/2,0)$, and $s_4 = (c/2,d/2,e/2)$ for real numbers $a,b,c,d,e$, where, without loss of generality, $\{s_1,s_2\}$ is a thin edge.  This map sends $p_5$ to $s_5 := (f/2,g/2,h/2)$ for some $f,g,h\in\reals$.


The congruence relations (\ref{eqn:r1})--(\ref{eqn:e}) for $s_1,\ldots,s_4$ obtained in the proof of Theorem~\ref{thm:U-4} only depend on weak isq-adequacy and so are still valid here, and we still have $b = t\sqrt n$ and $d = u\sqrt n$ where $a,c,t,u\in\oddrats$ and $n\in\oddints$ is square-free and $n\equiv_8 3$.  (See the Remark following \eref{eqn:e}.)  For convenience, we reproduce the congruences we need here:
\begin{align}
c^2+d^2+e^2 &\scong{16} 4r_{1,4}\;, \label{eqn:r1-repro} \\
ac+bd &\scong 8 2(r_{1,3} + r_{1,4} - r_{3,4})\;. \label{eqn:acbd-again-repro}
\end{align}
We have analogous congruences for the set $\{s_1,s_2,s_3,s_5\}$ by substituting $s_5$ for $s_4$.  Three of these are new, but we only need these two:
\begin{align}
f^2+g^2+h^2 &\scong{16} 4r_{1,5}\;, \label{eqn:r15-repro} \\
af+bg &\scong 8 2(r_{1,3} + r_{1,5} - r_{3,5})\;, \label{eqn:afbg-again-repro}
\end{align}
and moreover, $f\in\oddrats$, \ $g = v\sqrt n$ for some $v\in\oddrats$, and $h^2\in\rats$ with $h^2 \scong 8 0$.  It follows that $bg = tvn$ and $dg = uvn$ are both in $\oddrats$.

We have one final new congruence relation by considering the edge $\{s_4,s_5\}$:
\begin{align}
(f-c)^2 + (g-d)^2 + (h-e)^2 &\scong{16} 4r_{4,5}\;. \label{5points3D4} 
\end{align}
In \erefs{eqn:r1-repro}--(\ref{5points3D4}), $r_{i,j} := ((J_{i,j}/J) \bmod 4) \in \{1,3\}$ for $1\le i<j\le 5$.

Substituting \erefs{eqn:r1-repro} and (\ref{eqn:r15-repro}) into \eref{5points3D4} gives
\begin{align}
2cf + 2dg + 2eh &\scong{16}  4r_{1,4} + 4r_{1,5} - 4r_{4,5} \nonumber \\
cf + dg + eh &\scong{8}  2(r_{1,4} + r_{1,5} - r_{4,5})\;, \label{5points3D6}
\end{align}
and since $cf + dg = cf + uvn \in\rats$, it follows that $eh\in\rats$.
%
%
%
%
Since $e^2 \scong{8} 0 \scong{8} h^2$, we have $e^2 = 2^m w$ and $h^2 = 2^p x$ for some $w,x \in \oddrats$ and integers $m,p \geq 3$.  Therefore, $eh=\pm 2^{(m+p)/2}\sqrt{wx}$.  But since $wx\in\oddrats$ and $eh\in\rats$, it must be that $m+p$ is even and $\sqrt{wx}\in\oddrats$.  Then because $(m+p)/2\ge 3$, we have $|eh|_2 \le 2^{-3} = 1/8$, i.e., by Lemma~\ref{lem:basic-squiggle-congruence}(\ref{item:zero}.),
\begin{equation}\label{eqn:eh}
eh \scong 8 0\;.
\end{equation}

Using \eref{eqn:eh}, we cast \erefs{eqn:acbd-again-repro}, (\ref{eqn:afbg-again-repro}), and (\ref{5points3D6}) with respect to $\scong{4}$ to get
\begin{align}
ac + bd \scong 4 2\;, \label{5points3D8} \\
af + bg \scong 4 2\;, \label{5points3D9} \\
cf + dg \scong 4 2\;. \label{5points3D10}
\end{align}
Furthermore, since all the terms on the left-hand sides are in $\oddrats$, each of them has residue $1$ or $3 \pmod{4}$.  So \erefs{5points3D8}--(\ref{5points3D10}) are only satisfied if the terms in each equation have the same residue (mod~$4$).  Therefore,
\begin{align}
ac &\scong{4} bd = tun\;, \label{5points3D11} \\
af &\scong{4} bg = tvn\;, \label{5points3D12} \\
cf  &\scong{4} dg = uvn\;. \label{5points3D13}
\end{align}
Multiplying \erefs{5points3D11} and (\ref{5points3D12}), we get
\begin{align}
a^2cf \scong 4 b^2dg = (tn)^2uv \label{5points3D14}
\end{align}
Since both $a$ and $tn$ are in $\oddrats$, so are their squares, and $a^2 \scong 4 (tn)^2$ by Fact~\ref{fact:power-of-2}.  Thus \eref{5points3D14} is equivalent to 
\begin{align}
cf \scong 4 uv \label{5points3D15}
\end{align}
So from \erefs{5points3D13} and (\ref{5points3D15}), $uvn \scong 4 uv$, which implies $n \scong 4 1$, contradicting the fact that $n \scong 4 3$ by \eref{eqn:n-is-3}.  Thus no set of five points in $\reals^3$ can be isq-adequate for $U_5$.
\end{proof}

\begin{Cor}
There is no set of five points in $\reals^3$ such that the squares of all interpoint ratios are in $\oddrats$.
\end{Cor}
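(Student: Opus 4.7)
The plan is to derive this corollary as an immediate consequence of Theorem~\ref{thm:noU5}. Let $\{p_1,\ldots,p_5\}\subseteq\reals^3$ be the alleged configuration, and set $J_{i,j} := d_{i,j}^{-2}$ for each pair, where $d_{i,j}$ is the Euclidean distance between $p_i$ and $p_j$. The hypothesis that every squared interpoint ratio $(d_{i,j}/d_{k,\ell})^2$ lies in $\oddrats$ is equivalent, via reciprocation (and using that $\oddrats$ is closed under reciprocals), to the statement that every ratio $J_{i,j}/J_{k,\ell}$ lies in $\oddrats$. The goal is to upgrade this condition to \emph{weak isq-adequacy for $U_5$} in the sense of Definition~\ref{def:adequacy}, after which Theorem~\ref{thm:noU5} supplies the contradiction.

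The key step is to produce a single positive real $J$ such that each $J_{i,j}/J$ is an odd integer. The plan is to write each ratio $J_{i,j}/J_{1,2} = a_{i,j}/b_{i,j}$ with $a_{i,j}, b_{i,j}\in\oddints$, let $L := \lcm\{b_{i,j} : 1\le i<j\le 5\}$, and set $J := J_{1,2}/L$. Since $L$ is an lcm of odd integers it is itself odd, so each quotient $L/b_{i,j}$ is an odd integer; hence $J_{i,j}/J = a_{i,j}(L/b_{i,j})$ is a product of two odd integers and therefore odd. This witnesses weak isq-adequacy for $U_5$, contradicting Theorem~\ref{thm:noU5}.

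I do not expect any real obstacle: the corollary is essentially a restatement of Theorem~\ref{thm:noU5} in coordinate-free language about distance ratios, and the only mildly delicate point is observing that an lcm of odd integers stays odd, so that the common denominator $L$ does not accidentally introduce a factor of~$2$ that would break the odd-integer-multiple condition.
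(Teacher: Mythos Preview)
Your proposal is correct and is exactly the argument the paper has in mind: the corollary is stated without proof immediately after Theorem~\ref{thm:noU5}, and the intended derivation is precisely to observe that ``all squared interpoint ratios lie in $\oddrats$'' is equivalent to weak isq-adequacy for $U_5$, then invoke the theorem. Your clearing-of-denominators step via the odd lcm $L$ is the standard way to pass from pairwise ratios in $\oddrats$ to a common $J$ with all $J_{i,j}/J$ odd integers, and the observation that $L/b_{i,j}$ stays odd (since $L$ itself is odd) is the one small point that needs checking.
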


\begin{Cor}
No arrangement of $n\ge 5$ points in $\reals^3$ is isq-adequate for $U_n$.
\end{Cor}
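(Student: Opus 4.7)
The plan is to derive this corollary almost immediately from Theorem~\ref{thm:noU5} by combining two facts already established in the paper: isq-adequacy implies weak isq-adequacy (noted right before Observation~\ref{obs:weak-subset}), and weak isq-adequacy is hereditary on subsets (Observation~\ref{obs:weak-subset}). No further geometric or number-theoretic work should be needed.

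The argument I would give is a one-line proof by contradiction. Suppose for contradiction that some set $X \subseteq \reals^3$ of $n \ge 5$ points is isq-adequate for $U_n$. Then $X$ is weakly isq-adequate for $U_n$. Pick any $5$-element subset $Y \subseteq X$; by Observation~\ref{obs:weak-subset}, $Y$ is weakly isq-adequate for $U_5$. But $Y \subseteq \reals^3$, which directly contradicts Theorem~\ref{thm:noU5}.

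There is no real obstacle here, since all the heavy lifting has been done in Section~\ref{sec:noU5}. The only thing worth double-checking is that Observation~\ref{obs:weak-subset} is stated for weak isq-adequacy rather than isq-adequacy (which it is, and the paper explicitly warns that the analogous statement fails for isq-adequacy). This is precisely why we had to pass through the weaker notion first: passing to a subset could in principle destroy the Eulerian condition of Theorem~\ref{thm:main}, so we cannot simply restrict directly from $U_n$-isq-adequacy to $U_5$-isq-adequacy, but we can restrict from weak $U_n$-isq-adequacy to weak $U_5$-isq-adequacy, and Theorem~\ref{thm:noU5} already rules out the latter in $\reals^3$.
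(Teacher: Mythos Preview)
Your proof is correct and is exactly the argument the paper intends: the corollary is stated without proof, and the implicit reasoning is precisely the chain isq-adequacy $\Rightarrow$ weak isq-adequacy $\Rightarrow$ (via Observation~\ref{obs:weak-subset}) weak isq-adequacy of any $5$-element subset, contradicting Theorem~\ref{thm:noU5}. Your remark about why one must pass through the weak notion is also on point.
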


\section{Conclusions and Further Work}
\label{sec:conclusions}

We have exactly characterized which sets of couplings are adequate for $U_n$ and have given examples of spatial arrangements of qubits adequate for $U_n$ where nearer qubits are more strongly coupled than those farther away.  We have shown planar and $3$-dimensional arrangements of four identical qubits adequate for $U_4$ satisfying the inverse square law and have characterized such arrangements.  We have also shown that this is the best possible; there are no such $n$-point arrangements in three dimensions, for $n\ge 5$.  We have not investigated couplings satisfying other inverse power laws.  We have also not considered arrangements of \emph{non}-identical qubits satisfying the inverse square law (or other inverse power laws).  In this latter case, there would be a constant $c_i$ associated with the $\ordth{i}$ qubit such that $J_{i,j} = c_ic_j/r^d$, where $r$ is the distance separating qubits~$i$ and $j$ and $d$ is a positive constant.

The results of Section~\ref{sec:inverse-square-law} only rules out exact implementations of $U_n$ and not approximate implementations.  It would be interesting to see how useful and feasible approximate implementations would be.

We have concentrated on implementing the operator $U_n$, which is constant-depth equivalent to fanout.  Studying $U_n$ instead of $F_n$ has two theoretical advantages over $F_n$: (1) $U_n$ is represented in the computational basis by a diagonal matrix; (2) unlike $F_n$, which has a definite control and targets, $U_n$ is invariant under any permutation of its qubits, or equivalently, it commutes with the SWAP operator applied to any pair of its qubits.  Are there other operators that are both constant-depth equivalent to fanout and implementable by a simple Hamiltonian?

The Hamiltonian $H_n$ only considers the $z$-components of the spins.  In Heisenberg interactions, the $x$- and $y$-components should also be included in the Hamiltonian, so that a pairwise coupling between spins~$i$ and $j$ would be $J_{i,j}(X_iX_j+Y_iY_j+Z_iZ_j)$.  In \cite{FZ:heisenberg} it was shown that these Hamiltonians can also simulate fanout provided all the pairwise couplings are equal.  We believe we can relax this latter restriction for these Hamiltonians as well.

Finally, the time needed to run our Hamiltonian is inversely proportional to the fundamental coupling constant $J$.  If $J$ is small relative to the actual couplings in the system, then this gives a poor time-energy trade-off and will likely be more difficult to implement quickly with precision.

\begin{bibdiv}
\begin{biblist}

\bib{Ajtai:AC0}{article}{
      author={Ajtai, M.},
       title={{$\Sigma^1_1$} formul\ae\ on finite structures},
        date={1983},
     journal={Annals of Pure and Applied Logic},
      volume={24},
       pages={1\ndash 48},
}

\bib{PhysRevLett.119.170503}{article}{
      author={Eldredge, Zachary},
      author={Gong, Zhe-Xuan},
      author={Young, Jeremy~T.},
      author={Moosavian, Ali~Hamed},
      author={Foss-Feig, Michael},
      author={Gorshkov, Alexey~V.},
       title={Fast quantum state transfer and entanglement renormalization
  using long-range interactions},
        date={2017Oct},
     journal={Phys. Rev. Lett.},
      volume={119},
       pages={170503},
         url={https://link.aps.org/doi/10.1103/PhysRevLett.119.170503},
}

\bib{FFGHZ:fanout}{article}{
      author={Fang, M.},
      author={Fenner, S.},
      author={Green, F.},
      author={Homer, S.},
      author={Zhang, Y.},
       title={Quantum lower bounds for fanout},
        date={2006},
     journal={Quantum Information and Computation},
      volume={6},
       pages={46\ndash 57},
      eprint={quant-ph/0312208},
}

\bib{FZ:heisenberg}{misc}{
      author={Fenner, S.},
      author={Zhang, Y.},
       title={Implementing fanout, parity, and mod gates via spin exchange
  interactions},
        date={2004},
        note={arXiv:quant-ph/0407125},
}

\bib{Fenner:fanout}{misc}{
      author={Fenner, S.~A.},
       title={Implementing the fanout gate by a {H}amiltonian},
        date={2003},
        note={arXiv:quant-ph/0309163},
}

\bib{FSS:AC0}{article}{
      author={Furst, M.},
      author={Saxe, J.~B.},
      author={Sipser, M.},
       title={Parity, circuits, and the polynomial time hierarchy},
        date={1984},
     journal={Mathematical Systems Theory},
      volume={17},
       pages={13\ndash 27},
}

\bib{GHMP:QACC}{article}{
      author={Green, F.},
      author={Homer, S.},
      author={Moore, C.},
      author={Pollett, C.},
       title={Counting, fanout and the complexity of quantum {ACC}},
        date={2002},
     journal={Quantum Information and Computation},
      volume={2},
       pages={35\ndash 65},
      eprint={quant-ph/0106017},
}

\bib{GuoEtAl:fanout}{misc}{
      author={Guo, A.~Y.},
      author={Deshpande, A.},
      author={Chu, S.-K.},
      author={Eldredge, Z.},
      author={Bienias, P.},
      author={Devulapalli, D.},
      author={Su, Y.},
      author={Childs, A.~M.},
      author={Gorshkov, A.~V.},
       title={Implementing a fast unbounded quantum fanout gate using power-law
  interactions},
        date={2020},
        note={arXiv:2007.00662},
}

\bib{Hastad:AC0}{book}{
      author={H{\aa}stad, J.},
       title={Computational limitations for small depth circuits},
   publisher={MIT Press},
     address={Cambridge, MA},
        date={1987},
}

\bib{HS:fanout}{inproceedings}{
      author={H{\o}yer, P.},
      author={{\v{S}}palek, R.},
       title={Quantum circuits with unbounded fan-out},
        date={2003},
   booktitle={Proceedings of the 20th symposium on theoretical aspects of
  computer science},
      series={Lecture Notes in Computer Science},
      volume={2607},
   publisher={Springer-Verlag},
       pages={234\ndash 246},
}

\bib{Moore:fanout}{misc}{
      author={Moore, C.},
       title={Quantum circuits: Fanout, parity, and counting},
        date={1999},
        note={arXiv:quant-ph/9903046},
}

\bib{Rosenthal:parity}{misc}{
      author={Rosenthal, G.},
       title={Bounds on the {$\textup{QAC}^0$} complexity of approximating
  parity},
        date={2020},
        note={arXiv:2008.07470},
}

\bib{Spalek:fanout}{misc}{
      author={{\v{S}}palek, R.},
       title={Quantum circuits with unbounded fan-out},
        date={2002},
        note={arXiv:quant-ph/0208043},
}

\bib{TT:constant-depth-collapse}{article}{
      author={Takahashi, Y.},
      author={Tani, S.},
       title={Collapse of the hierarchy of constant-depth exact quantum
  circuits},
        date={2016},
     journal={Computational Complexity},
      volume={25},
      number={4},
       pages={849\ndash 881},
      eprint={arXiv:1112.6063},
        note={Conference version in Proceedings of the 28th IEEE Conference on
  Computational Complexity (CCC 2013)},
}

\bib{Wikipedia:Transverse-field_Ising_model}{misc}{
      author={{Wikipedia}},
       title={Transverse-field {I}sing model},
        date={2021},
  note={\url{https://en.wikipedia.org/wiki/Transverse-field_Ising_model}},
}

\end{biblist}
\end{bibdiv}

\appendix

\section{The Quantum Circuit for Parity}
\label{sec:parity}

In this section, we show by direct calculation that the circuit $C_n$ shown in Figure~\ref{fig:parity-circuit} implements the parity gate $P_n$, for all $n\ge 1$.  The special case for $n\equiv 2\pmod{4}$ was shown in \cite{Fenner:fanout}.  Here, $U_n$ is defined by \eref{eqn:U-n}, and
\[ G_n := S^{1-n} = \begin{cases}
    S & \mbox{if $n\equiv 0 \pmod{4}$,} \\
    I & \mbox{if $n\equiv 1 \pmod{4}$,} \\
    S^\dagger & \mbox{if $n\equiv 2 \pmod{4}$,} \\
    Z & \mbox{if $n\equiv 3 \pmod{4}$,}
\end{cases} \]
where $S$ is the gate satisfying $S\ket{b} = i^b\ket{b}$ for $b\in\{0,1\}$, $I$ is the identity, and $Z$ is the Pauli $z$-gate.  ($G_n$ is chosen so that $G_n\ket{b} = i^{b(1-n)}\ket{b}$.)

Fix any $x_1,\ldots,x_n,t\in\{0,1\}$.  For convenience, we separate the first $n-1$ qubits, which only participate in $U_n$ and $U_n^\dagger$, letting $\vec x := x_1\ldots x_{n-1}$.  We set $w := w(\vec x) = x_1+\cdots+x_{n-1}$ and $W := w + x_n$, the Hamming weight of $x_1\cdots x_n$.  We set $p := W\bmod 2$, the parity of $x_1\cdots x_n$, which will be XORed with $t$ in the target qubit.  Running the circuit starting with initial state $\ket{\vec x}\ket{x_n}\ket{t}$, we have
\begin{align*}
    \ket{\vec x}\ket{x_n}\ket{t}
    &\stackrel{H}{\longmapsto} 2^{-1/2}\ket{\vec x}\left(\ket{0} + (-1)^{x_n}\ket{1}\right)\ket{t} \\
    &= 2^{-1/2}\left(\ket{\vec x,0} + (-1)^{x_n}\ket{\vec x,1}\right)\ket{t} \\
    &\stackrel{U_n}{\longmapsto} 2^{-1/2}\left(i^{w(n-w)}\ket{\vec x,0} + (-1)^{x_n}\,i^{(w+1)(n-w-1)}\ket{\vec x,1}\right)\ket{t} \\
    &= 2^{-1/2}\,i^{w(n-w)}\ket{\vec x} \left(\ket{0} + i^{n-1-2(w+x_n)}\ket{1}\right)\ket{t} \\
    &= 2^{-1/2}\,i^{w(n-w)}\ket{\vec x} \left(\ket{0} + (-1)^W\,i^{n-1}\ket{1}\right)\ket{t} \\
    &\stackrel{G_n}{\longmapsto} 2^{-1/2}\,i^{w(n-w)}\ket{\vec x} \left(\ket{0} + (-1)^W\ket{1}\right)\ket{t} \\
    &= 2^{-1/2}\,i^{w(n-w)}\ket{\vec x} \left(\ket{0} + (-1)^p\ket{1}\right)\ket{t} \\
    &\stackrel{H}{\longmapsto} i^{w(n-w)}\ket{\vec x} \ket{p}\ket{t}\;.
\end{align*}
At this point, the C-NOT gate is applied, resulting in the state $i^{w(n-w)}\ket{\vec x} \ket{p}\ket{t\oplus p}$.  The remaining gates undo the above action on the first $n$ qubits, resulting in the state $\ket{\vec x}\ket{x_n}\ket{t\oplus p}$, which is the same as $P_n$ applied to the initial state.

\bigskip

Finally, we note that $C_n$ only depends on $U_n$ up to an overall phase factor: any gate $V_n\propto U_n$ can be substituted for $U_n$ in the circuit, because any phase factor introduced by applying $V_n$ on the left will be cancelled when $V_n^\dagger$ is applied on the right.  This fact is, of course, unnecessary for physical implementation.

\section{Implementing $U_n$ with Equal Couplings: Proof of Lemma~\ref{lem:U-n}}
\label{sec:U-n}

In this section give an updated proof of Lemma~\ref{lem:U-n}, which we restate here:

\begin{Lem}
For $n\ge 1$, let $H_n := J\sum_{1\le i<j\le n} Z_i Z_j$ for some $J>0$.  Then $U_n = V_n(t,\theta)$ for some $\theta\in\reals$, where $t := \pi/(4J)$ and $V_n(t,\theta)$ is as in \eref{eqn:V-n}.
\end{Lem}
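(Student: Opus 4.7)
The plan is to diagonalize $H_n$ in the computational basis and read off the required phase. Every basis state $\ket{x}$ with $x\in\{0,1\}^n$ is an eigenvector of each $Z_iZ_j$ with eigenvalue $(-1)^{x_i+x_j}$, so $\ket{x}$ is also an eigenvector of $H_n$ with eigenvalue $J\sum_{1\le i<j\le n}(-1)^{x_i+x_j}$. Thus $e^{-iH_nt}\ket{x}$ is just $\ket{x}$ multiplied by a phase depending only on the Hamming weight $w := w(x)$, and the whole task reduces to computing that phase and showing it equals $i^{w(n-w)}$ up to a global factor.

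First I would evaluate the sum $S(w):=\sum_{i<j}(-1)^{x_i+x_j}$ by bucketing pairs according to which of $x_i,x_j$ are $0$ or $1$. There are $\binom{w}{2}+\binom{n-w}{2}$ pairs contributing $+1$ and $w(n-w)$ pairs contributing $-1$, and a short manipulation using $\binom{w}{2}+\binom{n-w}{2}+w(n-w)=\binom{n}{2}$ gives the clean identity
\begin{equation*}
S(w) \;=\; \binom{n}{2} - 2w(n-w)\;.
\end{equation*}
Consequently $H_n\ket{x} = J\bigl[\binom{n}{2}-2w(n-w)\bigr]\ket{x}$.

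Next I would substitute $t := \pi/(4J)$, which is chosen precisely so that $2Jt = \pi/2$, turning the weight-dependent part of the phase into a power of $i$:
\begin{equation*}
e^{-iH_n t}\ket{x} \;=\; e^{-i(\pi/4)\binom{n}{2}}\,e^{i(\pi/2)w(n-w)}\ket{x} \;=\; e^{-i(\pi/4)\binom{n}{2}}\, i^{w(n-w)}\ket{x}\;.
\end{equation*}
Since $i^{w(n-w)}\ket{x} = U_n\ket{x}$ by \eref{eqn:U-n}, the global prefactor $e^{-i(\pi/4)\binom{n}{2}}$ is independent of $x$, so setting $\theta := -(\pi/4)\binom{n}{2}$ gives $V_n(t,\theta)\ket{x} = U_n\ket{x}$ on every basis state, hence $V_n(t,\theta)=U_n$ as required.

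There is no real obstacle here; the only thing to be slightly careful about is the combinatorial identity for $S(w)$ and the bookkeeping of the global phase, which must be pulled out of the sum cleanly so that the residual $x$-dependent factor is exactly $i^{w(n-w)}$. The calibration $Jt=\pi/4$ is exactly what makes this match; any other positive $t$ would give the wrong power of $i$, so the choice $t=\pi/(4J)$ is forced (up to adding multiples of $2\pi/J$, which are absorbed into $\theta$).
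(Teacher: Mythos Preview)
Your proof is correct and follows essentially the same route as the paper's: both diagonalize $H_n$ in the computational basis, reduce to the identity $\sum_{i<j}(-1)^{x_i+x_j}=\binom{n}{2}-2w(n-w)$, and then read off the global phase after plugging in $Jt=\pi/4$. The only cosmetic difference is that you obtain the identity by direct pair-counting, whereas the paper obtains it by writing the double sum as $\tfrac{1}{2}\bigl(\sum_i(-1)^{x_i}\bigr)^2-\tfrac{n}{2}$; both are one-line computations leading to the same result.
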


\begin{proof}
Looking at Equations~(\ref{eqn:U-n}) and (\ref{eqn:V-n}), we see that for $t,\theta\in\reals$, the condition $V(t,\theta) = U_n$ is equivalent to
\[ \exp\left(-i\theta-i\sum_{1\le i< j\le n} Jt(-1)^{x_1+x_j}\right) = i^{w(x)(n-w(x))}
\]
holding for all $x\in\{0,1\}^n$.  Noting that $i = e^{i\pi/2}$ and $Jt = \pi/4$ and equating exponents, this condition becomes
\begin{equation}\label{eqn:phase-equal-couplings}
\theta + \frac{\pi}{4}\sum_{1\le i<j\le n} (-1)^{x_i+x_j} \equiv_{2\pi} -\left(\frac{\pi}{2}\right)w(x)(n-w(x))
\end{equation}
for all $x\in\{0,1\}^n$ (cf.\ Equations~(\ref{eqn:V-n-of-x}) and (\ref{eqn:U-n-phase})).
The sum on the left-hand side becomes
\begin{align*}
\sum_{i<j} (-1)^{x_i + x_j}
&= \frac{1}{2}\sum_{i\ne j} (-1)^{x_i + x_j} = -\frac{n}{2} + \frac{1}{2}\sum_i\sum_j (-1)^{x_i+x_j} = -\frac{n}{2} + \frac{1}{2}\left(\sum_{i=1}^n (-1)^{x_i}\right)^2 \\
&= -\frac{n}{2} + \frac{1}{2}\left(\sum_i (1-2x_i) \right)^2 = -\frac{n}{2} + \frac{1}{2}\left(n-2w(x)\right)^2
= \frac{n^2-n}{2} - 2w(x)(n-w(x))\;.
\end{align*}
Substituting this back into \eref{eqn:phase-equal-couplings} satisfies it, provided we set $\theta := \pi(n^2-n)/8$.
\end{proof}

\end{document}